\DeclareMathAlphabet{\mathbbold}{U}{bbold}{m}{n}
\newcommand*{\interpreted}[1]{{[\![ #1 ]\!]}}
\def\Epsilon{\mathcal{E}}
\def\rank{\mathrm{rank}}
\def\sig{\mathrm{sig}}
\def\DS{\mathcal{DS}}
\def\iff{\mathrel{\mathrm{iff}}}
\def\<#1>{\langle #1 \rangle}  
\newcommand{\vvec}[1]{{\bf #1}}
\newcommand{\wvec}[1]{\overrightarrow{#1}}
\newtheorem{theorem}{Theorem}
\newtheorem{definition}[theorem]{Definition}
\newtheorem{lemma}[theorem]{Lemma}
\newtheorem{proposition}[theorem]{Proposition}
\newtheorem{example}[theorem]{Example}
\newenvironment{proof}[1][Proof]{\begin{trivlist}
\item[\hskip \labelsep {\bfseries #1}]}{\end{trivlist}}
\providecommand{\squareforqed}{\hbox{\rlap{$\sqcap$}$\sqcup$}}
\providecommand{\QED}{{\unskip\nobreak\hfil%
	\penalty50\hskip1em\null\nobreak\hfil\squareforqed%
	\parfillskip=0pt\finalhyphendemerits=0\endgraf}}
\title{Reduced Dependency Spaces for Existential Parameterised Boolean Equation Systems\thanks{This paper is partially supported by JSPS KAKENHI Grant Number JP17H01721.}}
\author{Yutaro Nagae
\institute{\quad Graduate School of Information Science\\
\quad Nagoya University}
\email{nagae\_y@trs.cm.is.nagoya-u.ac.jp}
\and
Masahiko Sakai
\institute{\quad Graduate School of Informatics\\
\quad Nagoya University}
\email{\quad sakai@i.nagoya-u.ac.jp}
}
\begin{document}
\maketitle

\begin{abstract}
 A parameterised Boolean equation system (PBES) is a set of equations
 that defines sets satisfying the equations as the least and/or greatest fixed-points.
 Thus this system is regarded as a declarative program defining predicates,
 where a program execution returns whether a given ground atomic formula holds or not.
 The program execution corresponds to 
 the membership problem of PBESs,
 which is however undecidable in general.

 This paper proposes a subclass of PBESs which expresses universal-quantifiers free formulas,
 and studies a technique to solve the problem on it.
 We use the fact that the membership problem is reduced to the problem whether a proof graph exists.
  To check the latter problem, we introduce a so-called dependency space which is a graph
 containing all of the minimal proof graphs.
 Dependency spaces are, however, infinite in general.
 Thus, we propose some conditions for equivalence relations
 to preserve the result of the membership problem,
 then we identify two vertices as the same under the relation.
 In this sense, dependency spaces possibly result in a finite graph.
 We show some examples having infinite dependency spaces which are reducible to finite graphs by equivalence relations.
 We provide a procedure to construct finite dependency spaces
 and show the soundness of the procedure.
 We also implement the procedure using an SMT solver and experiment on some examples including a downsized McCarthy 91 function.
\end{abstract}

\section{Introduction}
A \emph{Parameterised Boolean Equation System} (PBES)~\cite{Groote2004a,Groote2004,GROOTE2005332} is a set of equations
denoting some sets as the least and/or greatest fixed-points.
PBESs can be used as a powerful tool for solving a variety of problems
such as process equivalences~\cite{Chen:2007:ECI:2392200.2392211}, 
model checking~\cite{Groote2004a,Groote:2005:MPD:1099102.1099103}, and so on.
 
We explain PBESs by an example PBES $\Epsilon_1$, which consists of the following two equations:
\[
 \begin{array}[t]{rcl}
  \nu X(n:N) & = & X(n+1) \vee Y(n)\\
  \mu Y(n:N) & = & Y(n+1)
 \end{array}
\]
$X(n:N)$ denotes that $n$ is a natural number and a formal parameter of $X$.
Each of the predicate variables $X$ and $Y$ represents a set of natural numbers regarding that $X(n)$ is true if and only if $n$ is in $X$.
These sets are determined by the equations,
where $\mu$ (resp.\ $\nu$) is a least (resp.\ greatest) fixed-point operator.
In the PBES $\Epsilon_1$, $Y$ is an empty set since $Y$ is the least set
satisfying that 
$Y(n) \iff Y(n+1)$ for any $n\geq 0$.
Similarly, $X$ is equal to $\mathbb{N}$ since $X$ is the greatest set
satisfying that 
$X(n) \iff X(n+1) \vee Y(n)$ for any $n\geq 0$.
A PBES is regarded as a declarative program defining predicates.
In this example, an execution of the program for an input $X(0)$ outputs true.


The membership problem for PBESs is undecidable in general~\cite{Groote2004a}.
Undecidability is proved by a reduction of the model checking problem
for the modal $\mu$-calculus with data.
Some techniques
have been proposed to solve the problem for some subclasses of PBESs:
one by instantiating a PBES to a
\emph{Boolean Equation System} (BES)~\cite{PLOEGER2011637},
one by calculating invariants~\cite{invariants},
and one by constructing a proof graph~\cite{Cranen2013}.
In the last method, the membership problem is reduced to an existence of a proof graph.  
If there exists a finite proof graph for a given instance of the problem,
it is not difficult to find it mechanically.
However, finite proof graphs do not always exist.
A technique is proposed in ~\cite{Nagae2016} that possibly produces a finite reduced proof graph, which represents an infinite proof graph.
The technique manages the disjunctive PBESs, in which data-quantifiers are not allowed.

In this paper, we propose a more general subclass, named \emph{existential PBESs}, and extend the notion of dependency spaces.
We discuss the relation between extended dependency spaces and the existence of proof graphs.
Dependency spaces are, however, infinite graphs in most cases.
Thus we reduce a dependency space for the existential class to a finite one 
in a more sophisticated way based on the existing technique in~\cite{Nagae2016}.
We also give a procedure to construct a reduced dependency space and show the soundness of the procedure.
We explain its implementation and an experiment on some examples.

\section{PBESs and Proof Graphs}
We follow~\cite{Cranen2013} and~\cite{GROOTE2005332} for basic notions related to PBESs and proof graphs.

We assume a set $\DS$ of \emph{data sorts}. 
For every data sort $D\in\DS$, we 
assume a set $\mathcal{V}_D$ of \emph{data variables} and a \emph{semantic domain} $\mathbb{D}$ corresponding to it.
In this paper, we assume $B,N\in\DS$ corresponding to the Boolean domain $\mathbb{B} = \{\mathbbold{t},\mathbbold{f}\}$ and the natural numbers $\mathbb{N}$, respectively.
We use $D$ to represent a sort in $\DS$, $\mathbb{D}$ for the semantic domain corresponding to $D$, and $d$ and $e$ as a data variable in $\mathcal{V}_D$.
We assume appropriate \emph{data functions} according to operators, and
use $\interpreted{\mathit{exp}}\delta$ to represent a value obtained by the evaluation of a \emph{data expression} $\mathit{exp}$ under a data environment $\delta$.
A data expression interpreted to a value in $\mathbb{B}$ is called a \emph{Boolean expression}.
We write $\vvec{a}$ or $\wvec{a}$ by using boldfaced font or an arrow 
to represent a sequence $a_1,\ldots,a_n$ of objects.
  Especially, $\vvec{d}{:}\vvec{D}$ is an abbreviation of a sequence $d_1{:}D_1, \dots, d_n{:}D_n$.
  We write $\mathbb{D}^*$ as
    a product
    $\mathbb{D}_1 \times\cdots\times \mathbb{D}_n$ of appropriate domains.
In this paper, we use usual operators and constants like $\mathrm{true}$, $\mathrm{false}$, $\le$, $0$, $1$, $+$, $-$, and so on, along with expected data functions.

A \emph{Parameterised Boolean Equation System} (PBES) $\Epsilon$ is 
a sequence of well-sorted equations:
  \[ 
  \left(\sigma_1 X_1(\vvec{d}{:}\vvec{D}) = \varphi_1\right) \mathrel{\cdots}
  \left(\sigma_n X_n(\vvec{d}{:}\vvec{D}) = \varphi_n\right)
  \]
  where $\varphi_i$ is a \emph{predicate formula} defined by the following BNF, 
  and $\sigma_i$ is either one of the quantifiers $\mu,\nu$ used to indicate the least and greatest fixed-points, respectively ($1 \leq i \leq n$). 
  \[
  \begin{array}[t]{rcl}
   \varphi & ::= & b \ |\ \varphi \wedge \varphi \ |\ \varphi \vee
	\varphi \ |\ \forall d{:}D\ \varphi \ |\ \exists d{:}D\ \varphi \ |\ X(\vvec{exp})
  \end{array}
  \]
Here $X$ is a predicate variable with fixed arity, 
$b$ is a Boolean expression,  $d$ is a data variable in $\mathcal{V}_D$, 
and $\vvec{exp}$ is a sequence of data expressions. 
We say $\Epsilon$ is \emph{closed} if
it does neither contain free predicate
variables nor free data variables.
Note that the negation is allowed only in expressions
$b$ or $\mathit{exp}$ as a data function.

\begin{example} \label{eg:PBES}
 A PBES $\Epsilon_2$ is given as follows:
 \[
 \begin{array}[t]{rcl}
  \nu X_1(n: N)& = & (n = 0 \wedge X_1(n+2))\ \vee\ \left(n > 0 \wedge X_2\left(n-1\right) \wedge X_1(n+2)\right)\\
  \mu X_2(n: N)& = & (n \geq 3 \wedge X_2(n-2))\ \vee\ \left(n \mathrel{=} 1 \wedge X_1\left(n-1\right)\right)
 \end{array}
 \]
\end{example}
%

Since the definition of the semantics is complex,
we omit it and we will explain it by an example.
The formal definition can be found in \cite{GROOTE2005332}.
The meaning of a PBES is determined in the bottom-up order.
Considering a PBES $\Epsilon_2$ in Example~\ref{eg:PBES},
we first look at the second equation, which defines a set $X_2$.
The set $X_2$ is fixed depending on the free variable $X_1$, i.e., the equation should be read as that $X_2$ is the least set satisfying the condition
``\(
 v \in X_2\ \iff\ (v \geq 3 \wedge v-2 \in X_2)\ \vee\ (v=1 \wedge v-1 \in X_1)
\)''
for any $v\in \mathbb{N}$.
Thus the set $X_2$ is fixed as
$\{1,3,5,\dots\}$ if $0\in X_1$; $\emptyset$ otherwise,
i.e., ``$X_2(v) \iff \mathrm{odd}(v) \wedge X_1(0)$'' for any $v\in \mathbbold{N}$.
Next, we replace the occurrence of $X_2$ in the first equation of $\Epsilon_2$ with $\mathrm{odd}(v) \wedge X_1(0)$, which results in
``\(
   \nu X_1(n: N)\ =\ (n=0 \wedge X_1(n+2)) \ \vee\ (n>0 \wedge \mathrm{odd}(n-1) \wedge X_1(0) \wedge X_1(n+2))
\)'', if we simplify it.
The set
$X_1$ is fixed as the greatest set satisfying that $v \in X_1 \iff (v=0 \wedge v+2 \in  X_1) \ \vee\ (v>0 \wedge \mathrm{odd}(v-1) \wedge 0 \in X_1 \wedge v+2 \in X_1)$ for any $v\in \mathbb{N}$.
All in all, we obtain $X_1=\{0,2,4,\dots\}$ and $X_2=\{1,3,5,\dots\}$.
The solution $\interpreted{\Epsilon}$ of a closed PBES $\Epsilon$
is a
  function which takes a predicate variable, and returns
  a function on $\mathbb{D}^* \to \mathbb{B}$
  that represents
  the corresponding predicate determined by the PBES.
For instance, in the example PBES $\Epsilon_2$,
  $\interpreted{\Epsilon_2}(X_1)$ is the function on $\mathbb{N} \to \mathbb{B}$
  that returns $\mathbbold{t}$ if and only if an even number is given,
  and
  $\interpreted{\Epsilon_2}(X_2)$ is the function on $\mathbb{N} \to \mathbb{B}$
  that returns $\mathbbold{t}$ if and only if an odd number is given.

The \emph{membership problem} for PBESs $\Epsilon$ is a problem that answers whether $X(\vvec{v})$ holds
(more formally $\interpreted{\Epsilon}(X)(\vvec{v})=\mathbbold{t}$) or not for a given predicate variable $X$ and a value $\vvec{v} \in \mathbb{D}^*$.
The membership problem is characterized by proof graphs introduced in~\cite{Cranen2013}.
For a PBES 
  \( \Epsilon\ =\
  \left(\sigma_1 X_1\left(\vvec{d{:}\vvec{D}}\right) = \varphi_1\right) \mathrel{\cdots}
  \left(\sigma_n X_n\left(\vvec{d{:}\vvec{D}}\right) = \varphi_n\right),
  \)
the \emph{rank} of $X_i$ ($1\leq i\leq n$) is the number of alternations of $\mu$ and $\nu$
in the sequence $\nu\sigma_1\cdots\sigma_n$.
Note that the rank of $X_i$ bound with $\nu$ is even
and the rank of $X_i$ bound with $\mu$ is odd.
For Example~\ref{eg:PBES}, $\rank_{\Epsilon_2}(X_1)=0$ and $\rank_{\Epsilon_2}(X_2)=1$.
\emph{Bound variables} are predicate variables $X_i$
that occur in the left-hand sides of equations in $\Epsilon$.
The set of bound variables is denoted by $\mathrm{bnd}(\Epsilon)$.
The \emph{signature} $\sig(\Epsilon)$ in $\Epsilon$ is defined by
$\sig(\Epsilon) = \left\{(X_i,\vvec{v}) \mid X_i \in \mathrm{bnd}(\Epsilon),\ \vvec{v} \in \mathbb{D}^* \right\}$.
We use $X_i(\vvec{v})$ to represent $(X_i,\vvec{v})\in \sig(\Epsilon)$.
We use some graph theory terminology to introduce proof graphs.
In a directed graph $\left<V,{\to}\right>$, the \emph{postset} of a vertex $v \in V$ is
the set $\{v' \in V \mid v \to v'\}$.
\begin{definition}\label{def:PG}
 Let $\Epsilon$ be a PBES,
 $V \mathrel{\subseteq} \sig(\Epsilon)$, ${\rightarrow} \mathrel{\subseteq} {V \times V}$, and $r \in \mathbb{B}$.
 The tuple $\left< V, \rightarrow, r \right>$ is called a \emph{proof graph} for the PBES
 if both of the following conditions hold:
 \begin{enumerate}
  \item\label{def:PG:1} 
	   For any $X_i(\vvec{v}) \in V$,
	   $\varphi_i(\vvec{v})$ is evaluated to $r$ 
	   under the assumption that the signatures in the postset of $X_i(\vvec{v})$ are $r$ and
	   the other signatures are $\lnot r$,
	   where $\varphi_i$ is the predicate formula that defines $X_i$.
  \item\label{def:PG:2}
	   For any infinite sequence $Y_0(\vvec{w_0}) \rightarrow Y_1(\vvec{w_1}) \rightarrow \cdots$ in the graph,
	   the minimum rank of $Y^{\infty}$ is even,
	   where $Y^{\infty}$ is the set of $Y_j$ that occurs infinitely often in the sequence. 
 \end{enumerate}
\end{definition}


We say that a proof graph $\left<V, \rightarrow, r\right>$ \emph{proves} $X_i(\vvec{v}) = r$
if and only if $X_i(\vvec{v}) \in V$.
In the sequel, we
  consider the case that $r = \mathbbold{t}$.
  The case $r=\mathbbold{f}$ will derive dual results.
\begin{example} \label{eg:PG}
 Consider the following graph with $r = \mathbbold{t}$ and $\Epsilon_2$ in Example~\ref{eg:PBES}:
 \vspace{1em}
 \[
 \xymatrix@R=1em{
 X_1(0) \ar[rr] && X_1(2) \ar[rr]\ar[ld] && X_1(4) \ar[ld]\ar[r] & \dots \\
 &X_2(1) \ar[lu]&& X_2(3) \ar[ll] && \dots\ar[ll]
 }
 \]
 This graph is a proof graph, 
 which is justified 
 from the following observations:
 \begin{itemize}
 \item The graph satisfies the condition~(1). For example, for a vertex $X_1(2)$, the predicate formula
	   $\varphi_1(2) = (2 = 0 \wedge X_1(2+2))\ \vee\ \left(2 > 0 \wedge X_2\left(2-1\right) \wedge X_1(2+2)\right)$ is $\mathbbold{t}$ assuming that $X_2(1)$ = $X_1(4) = \mathbbold{t}$.
 \item The graph satisfies the condition~(2).  For example, for an infinite sequence $X_1(0) \rightarrow X_1(2) \rightarrow X_2(1) \rightarrow X_1(0) \rightarrow\cdots$, the minimum rank of $\{X_1,X_2\}$ is $0$.
 \end{itemize}
%
\end{example}

The next theorem states the relation between proof graphs and the membership problem on a PBES.
\begin{theorem}[\cite{Cranen2013}]\label{thm:PBES2PG}
 For a PBES $\Epsilon$ and
 a $X_i(\vvec{v}) \in \sig (\Epsilon)$, 
 the existence of a proof graph $\left<V,\rightarrow,r\right>$ 
 such that $X_i(\vvec{v}) \in V$ coincides with
 $\interpreted{\Epsilon}(X_i)(\vvec{v}) = r$.
\end{theorem}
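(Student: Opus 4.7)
The plan is to establish each direction of the biconditional separately, leveraging the standard ordinal-approximation semantics for alternating fixed points (equivalently, the parity game associated with $\Epsilon$ on $\sig(\Epsilon)$, where the priority of a vertex $X_j(\vvec{w})$ is $\rank_\Epsilon(X_j)$). By the fundamental theorem for such games, $\interpreted{\Epsilon}(X_j)(\vvec{w}) = \mathbbold{t}$ holds precisely when a winning strategy for the ``disjunctive'' player exists from $X_j(\vvec{w})$, and the core of the proof is to show that such strategies are in bijection with proof graphs containing $X_j(\vvec{w})$.

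For the $(\Leftarrow)$ direction, assume $\left< V,\to,\mathbbold{t}\right>$ is a proof graph with $X_i(\vvec{v}) \in V$, and suppose for contradiction that $\interpreted{\Epsilon}(X_i)(\vvec{v}) = \mathbbold{f}$. Condition~(\ref{def:PG:1}) forces $\varphi_i(\vvec{v})$ to evaluate to $\mathbbold{t}$ when the postset of $X_i(\vvec{v})$ is assumed true and the complement false; since this disagrees with the actual solution, a syntactic analysis of the evaluation witness shows that the postset must contain some $Y_1(\vvec{w_1})$ with $\interpreted{\Epsilon}(Y_1)(\vvec{w_1}) = \mathbbold{f}$ as well. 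Iterating, I obtain an infinite path $X_i(\vvec{v}) \to Y_1(\vvec{w_1}) \to Y_2(\vvec{w_2}) \to \cdots$ whose vertices are uniformly false in the solution. A descent argument on ordinal approximations (tracking which $\mu$-approximation ordinal certifies the falsity at each step) then forces the minimum of $\{\rank_\Epsilon(Y_j) \mid Y_j \in Y^\infty\}$ to be odd, contradicting condition~(\ref{def:PG:2}).

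For the $(\Rightarrow)$ direction, given $\interpreted{\Epsilon}(X_i)(\vvec{v}) = \mathbbold{t}$, I take $V = \{Y(\vvec{w}) \in \sig(\Epsilon) \mid \interpreted{\Epsilon}(Y)(\vvec{w}) = \mathbbold{t}\}$ and define edges by choosing, for each $Y(\vvec{w}) \in V$, witnesses of the true evaluation of $\varphi_Y(\vvec{w})$: for $\vee$ a true disjunct, for $\wedge$ both conjuncts, for $\exists d{:}D$ some $v \in \mathbb{D}$ making the body true, and for $\forall d{:}D$ every $v \in \mathbb{D}$. Condition~(\ref{def:PG:1}) then holds by construction. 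For condition~(\ref{def:PG:2}), one refines the choice of existential and disjunctive witnesses so that whenever the current vertex is $\mu$-bound its successor's supporting ordinal is strictly smaller; well-foundedness of the ordinals rules out any infinite path whose minimum recurring rank is odd.

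The principal obstacle is aligning the ordinal structure of the nested $\mu/\nu$ blocks with the rank-minimum parity condition~(\ref{def:PG:2}). One must define a lexicographic order on ordinal vectors indexed by rank so the descent argument stays strict on $\mu$-layers while still permitting unbounded moves across $\nu$-layers; the presence of $\forall$-quantifiers compounds this, since the postset of a single vertex may then be infinite yet every branch must remain compatible with the parity condition.
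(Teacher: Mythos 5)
This theorem is not proved in the paper at all: it is imported from \cite{Cranen2013}, so there is no in-paper argument to compare yours against. In outline your plan does follow the same standard route as the cited source — monotonicity of predicate formulas in the predicate variables (which is what your ``syntactic analysis'' step really is), ordinal approximants of the fixed points, and a rank-indexed well-founded descent, equivalently the correspondence with winning strategies in the parity game over $\sig(\Epsilon)$ with priorities given by $\rank_\Epsilon$.

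As it stands, however, there is a genuine gap: the step you yourself label the ``principal obstacle'' — the lexicographic measure on ordinal vectors indexed by rank — is where essentially all the content of both directions lies, and it is only announced, not carried out. In the $(\Leftarrow)$ direction you must first pass to a suffix of the infinite path on which only the infinitely recurring variables occur, and then exhibit a well-founded quantity that strictly decreases along it; note also that your bookkeeping is inverted: it is the \emph{truth} of $\mu$-bound signatures and the \emph{falsity} of $\nu$-bound signatures that admit ordinal certificates (the stage of the upward, resp.\ downward, iteration), so along your all-false path, where condition~(\ref{def:PG:2}) makes the minimal recurring rank even, i.e.\ $\nu$-bound, the descending object must be the $\nu$-approximant stage of that minimal-rank variable, and you must argue it never increases across intermediate visits to higher-rank variables. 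Symmetrically, in the $(\Rightarrow)$ direction ``choose witnesses so that the supporting ordinal strictly decreases at $\mu$-bound vertices'' cannot be a single ordinal per vertex: it has to be the rank-indexed vector (Streett--Emerson-style signatures), and you must show that such witnesses can be fixed \emph{simultaneously} for all vertices of $V$ so that condition~(\ref{def:PG:2}) holds on every infinite path, not just on paths followed greedily from $X_i(\vvec{v})$. Finally, the claimed ``bijection'' between winning strategies and proof graphs is overstated (distinct proof graphs can induce the same strategy and conversely), though nothing in your argument actually depends on it.
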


\section{Extended Dependency Spaces}\label{sec:exdepsp}
  This paper discusses an \emph{existential}
  subclass of PBESs where universal-quantifiers are not allowed
  \footnote{This restriction can be relaxed so that universal-quantifiers emerge in $\varphi_{ik}$ of Proposition~\ref{prop:expbes},
      which does not affect the arguments of this paper.
  }.
This class properly includes disjunctive PBESs~\cite{Koolen2015}.
%
%
Existential PBESs can be represented in simpler forms as shown in the next proposition.
\begin{restatable}{proposition}{expbes}\label{prop:expbes}
 For every existential PBES $\Epsilon$, there exists an existential PBES $\Epsilon'$ satisfying
 $\interpreted{\Epsilon}(X) = \interpreted{\Epsilon'}(X)$ for all $X \in \mathrm{bnd}(\Epsilon)$,
 and where $\Epsilon'$ is of the following form:
 \[
 \begin{array}[t]{rcl}
  \sigma_1 X_1(\vvec{d}{:}\vvec{D})& = & {\displaystyle \bigvee_{1 \leq k \leq m_1}}
   \exists \vvec{e}{:}\vvec{D}\ \varphi_{1k}(\vvec{d},\vvec{e}) \wedge X_{a_{1k1}}\bigl(\wvec{f_{1k1}(\vvec{d},\vvec{e})}\bigr) \wedge \dots \wedge X_{a_{1kp_{1k}}}\bigl(\wvec{f_{1kp_{1k}}(\vvec{d},\vvec{e})}\bigr) \\
  & \vdots &\\
  \sigma_n X_n(\vvec{d}{:}\vvec{D})& = & {\displaystyle \bigvee_{1 \leq k \leq m_n}}
   \exists \vvec{e}{:}\vvec{D}\ \varphi_{nk}(\vvec{d},\vvec{e}) \wedge X_{a_{nk1}}\bigl(\wvec{f_{nk1}(\vvec{d},\vvec{e})}\bigr) \wedge \dots \wedge X_{a_{nkp_{nk}}}\bigl(\wvec{f_{nkp_{nk}}(\vvec{d},\vvec{e})}\bigr) \\
 \end{array}
 \]
 where $\sigma_i$ is either $\mu$ or $\nu$,
 $\wvec{f_{ikj}(\vvec{d},\vvec{e})}$ is a sequence of data expressions possibly containing variables $\vvec{d},\vvec{e}$,
 and $\varphi_{ik}(\vvec{d},\vvec{e})$ is a
 Boolean expression
 containing no free variables except for $\vvec{d},\vvec{e}$.
\end{restatable}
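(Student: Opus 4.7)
The plan is to put each right-hand side $\varphi_i$ into the required existential--disjunctive normal form by a sequence of meaning-preserving rewrites on predicate formulas, and then invoke the fact that equations with semantically equivalent right-hand sides define the same fixed point, so the overall interpretation is unchanged.

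First I would alpha-rename all bound data variables occurring in each $\varphi_i$ so that every existentially quantified variable is fresh with respect to the formal parameters $\vvec{d}$ and to all other quantified variables in $\varphi_i$. Then I can push the existentials outward by iterated application of the equivalences $(\exists e.\ \phi) \vee \psi \equiv \exists e.\ (\phi \vee \psi)$ and $(\exists e.\ \phi) \wedge \psi \equiv \exists e.\ (\phi \wedge \psi)$, both valid because $e$ is not free in $\psi$ by the renaming. A straightforward induction on the structure of $\varphi_i$ shows that this yields a formula of the shape $\exists \vvec{e}{:}\vvec{D}.\ \psi_i$ where $\psi_i$ is quantifier-free and built from Boolean expressions $b$ and predicate atoms $X(\vvec{exp})$ using only $\wedge$ and $\vee$.

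Next I would put $\psi_i$ into disjunctive normal form over the atoms $b$ and $X(\vvec{exp})$, obtaining $\psi_i \equiv \bigvee_{k=1}^{m_i} \gamma_{ik}$ with each $\gamma_{ik}$ a finite conjunction of such atoms. Because the grammar forbids negation of predicate formulas (negation is permitted only inside the Boolean expressions $b$), every predicate-variable atom in each $\gamma_{ik}$ occurs positively. Within each conjunct I would gather the Boolean atoms into a single Boolean expression $\varphi_{ik}(\vvec{d},\vvec{e})$, taking $\mathrm{true}$ if none remain, and collect the predicate atoms into $X_{a_{ik1}}(\wvec{f_{ik1}(\vvec{d},\vvec{e})}) \wedge \cdots \wedge X_{a_{ikp_{ik}}}(\wvec{f_{ikp_{ik}}(\vvec{d},\vvec{e})})$, allowing $p_{ik}=0$ when no predicate atoms are present. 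Distributing the outer existential over the disjunction via $\exists \vvec{e}.\ \bigvee_k \gamma_{ik} \equiv \bigvee_k \exists \vvec{e}.\ \gamma_{ik}$ then produces exactly the shape demanded by the proposition; the degenerate case where $\varphi_i$ is equivalent to $\mathrm{false}$ is covered by taking $m_i=0$ or by a single disjunct consisting of the Boolean constant $\mathrm{false}$.

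Finally, since every transformation above is a first-order logical equivalence that holds uniformly in every interpretation of the predicate variables, the monotone operator on valuations induced by each equation is unchanged. Consequently the least and greatest fixed points computed at every block of the PBES remain identical, which yields $\interpreted{\Epsilon}(X) = \interpreted{\Epsilon'}(X)$ for every $X \in \mathrm{bnd}(\Epsilon)$. The main source of difficulty will be the bookkeeping during the push-out of the existentials: I have to choose the renaming so that the free variables of each resulting $\varphi_{ik}$ and of each sequence $\wvec{f_{ikj}(\vvec{d},\vvec{e})}$ are contained in $\vvec{d}\cup\vvec{e}$. This is routine in principle but easy to mis-state unless the induction on $\varphi_i$ is set up so that the induction hypothesis tracks both the fresh-variable vector and the quantifier-free residue simultaneously.
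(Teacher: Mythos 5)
Your proposal is correct and takes essentially the same route as the paper's own proof, which likewise renames bound variables, lifts the existential quantifiers outward, computes the disjunctive normal form, and reads off the required shape. The extra detail you give on freshness bookkeeping, degenerate clauses, and preservation of the fixed points under semantically equivalent right-hand sides merely makes explicit what the paper's three-step sketch leaves implicit.
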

\noindent
 In contrast, a disjunctive PBES is of the following form:
 \[
 \begin{array}[t]{rcl}
  \sigma_1 X_1(\vvec{d}{:}\vvec{D})& = & {\displaystyle \bigvee_{1 \leq k \leq m_1}}
   \exists \vvec{e}{:}\vvec{D}\ \varphi_{1k}(\vvec{d},\vvec{e}) \wedge X_{a_{1k1}}\bigl(\wvec{f_{1k1}(\vvec{d},\vvec{e})}\bigr)\\
  & \vdots &\\
  \sigma_n X_n(\vvec{d}{:}\vvec{D})& = & {\displaystyle \bigvee_{1 \leq k \leq m_n}}
   \exists \vvec{e}{:}\vvec{D}\ \varphi_{nk}(\vvec{d},\vvec{e}) \wedge X_{a_{nk1}}\bigl(\wvec{f_{nk1}(\vvec{d},\vvec{e})}\bigr)\\
 \end{array}
 \]
 We can easily see that disjunctive PBESs are subclass of existential PBESs.
As a terminology, we use \emph{$k$-th clause for $X_i$} to refer to $\exists \vvec{e}{:}\vvec{D}\ \varphi_{ik}(\vvec{d},\vvec{e}) \wedge X_{a_{ik1}}\bigl(\wvec{f_{ik1}(\vvec{d},\vvec{e})}\bigr) \wedge \dots \wedge X_{a_{ikp_{ik}}}\bigl(\wvec{f_{ikp_{ik}}(\vvec{d},\vvec{e})}\bigr)$.

Hereafter, we extend the notion of dependency spaces~\cite{Koolen2015}, which is designed for disjunctive PBESs, to those for existential PBESs.
The dependency space for a PBES contains all its minimal proof graphs and hence is valuable to find a proof graph.
The dependency space for a disjunctive PBES is a graph consisting of the vertices labelled with $X(\vvec{v})$ for each data $\vvec{v}\in
\mathbb{D}^\ast$ and the edges $X_i(\vvec{v}) \to X_j(\vvec{w})$ for all dependencies meaning that $X_j(\vvec{w}) \implies X_i(\vvec{v})$.
Here $X_j(\vvec{w}) \implies X_i(\vvec{v})$ means that the predicate formula $\varphi_i(\vvec{v})$ of $X_i$ holds under the assumption that $X_j(\vvec{w})$ holds.
A proof graph, if it exists, is found as its subgraph by seeking an infinite path satisfying a condition (2) of Definition~\ref{def:PG}.
This corresponds to choosing one out-going edge for each vertex.
In this sense, the dependency space consists of $\vee$-vertices.
This framework makes sense because a disjunctive PBES contains exactly one predicate variable in each clause.

On the other hand an existential PBES generally contains more than one predicate variable in each clause $\exists \vvec{e}{:}\vvec{D}\ \varphi(\vvec{d},\vvec{e}) \wedge X_{a_1}(\wvec{f_1(\vvec{d},\vvec{e})})\wedge\dots\wedge X_{a_p}(\wvec{f_p(\vvec{d},\vvec{e})})$
  defining $X_i$, which induces dependencies $X_{a_1}(\vvec{w_1})\wedge\dots\wedge X_{a_p}(\vvec{w_p}) \implies X_i(\vvec{v})$
for any data $\vvec{v},\vvec{u} \in \mathbb{D}^*$ such that $\varphi(\vvec{v},\vvec{u})$ and
$\vvec{w_1} = \wvec{f_1(\vvec{v},\vvec{u})}, \dots, \vvec{w_p} = \wvec{f_p(\vvec{v},\vvec{u})}$.
Hence $\wedge$-vertices are necessary.
Therefore, we extend the notion of dependency spaces by introducing $\wedge$-vertices.
Such dependencies vary according to the clauses.
Thus, 
we need additional parameters $i,k$ 
for $\wedge$-vertices in keeping track of the
$k$-th clause of $X_i$.
For these reasons, 
each $\wedge$-vertex is designed to be a
quadruple $(i,k,\vvec{v},\vvec{u})$.

We illustrate the idea by an example.
Consider an existential PBES $\Epsilon_3$:
\[
 \begin{array}[t]{rcl}
  \nu X_1(n: N)& = & \exists n'{:}N\ \mathrm{even}(n) \wedge X_1(3n+5n') \wedge X_1(4n+5n')
 \end{array}
\]
The dependencies induced from the equation are $X_1(3n+5n')\wedge X_1(4n+5n')\implies X_1(n)$
for each $n'\in \mathbb{N}$ and even $n\in \mathbb{N}$.
Observing the case $n=2$, the dependency $X_1(6+5n')\wedge X_1(8+5n')\implies X_1(2)$ exists for each $n'\in \mathbb{N}$.
In order to show that $X_1(2)$ holds, it is enough that 
we choose one of these dependencies for constructing a proof graph.
Suppose that we will show $X_1(2)$, we must find some $n'$ such that 
both $X_1(6+5n')$ and $X_1(8+5n')$ hold.  Thus it is natural to introduce a $\vee$-vertex $X_1(2)$ having edges to $\wedge$-vertices corresponding to $n'$ values.
Each $\wedge$-vertex has out-going edges to $X_1(6+5n')$ and $X_1(8+5n')$.
This is represented in Figure~\ref{fig:depE3},
where $\vee$-vertices are oval and newly-introduced $\wedge$-vertices are rectangular.
\begin{figure}[ht]
 \[\small
 \xymatrix@R=1em{
 && *+[F-:<10pt>]{X_1(2)} \ar[ld]\ar[d]\ar[rd]\\
 & *+[F-]{(1,1,2,0)} \ar[ld]\ar[d] & *+[F-]{(1,1,2,1)} \ar[d]\ar[rd]&\dots\\
 *+[F-:<10pt>]{X_1(6)} & *+[F-:<10pt>]{X_1(8)}  & *+[F-:<10pt>]{X_1(11)} & *+[F-:<10pt>]{X_1(13)}\\
 \vdots & \vdots & \vdots & \vdots
 }
 \]
 \caption{The dependency space of $\Epsilon_3$}
 \label{fig:depE3}
\end{figure}
Each $\wedge$-vertex is labelled with $(i,k,\vvec{v},\vvec{w})$ where $i$ and $k$ come from $k$-th clause for $X_i$.

Generally the extended graph consists of $\vee$-vertices $X_i(\vvec{v})$ for all $1\leq i\leq n$ and $\vvec{v}\in \mathbb{D}^*$ and $\wedge$-vertices $(i,k,\vvec{v},\vvec{w})$ for all $1\leq i\leq n$, $k$, $\vvec{v}\in \mathbb{D}^*$, and $\vvec{w}\in \mathbb{D}^*$.
Each $k$-th clause 
$\exists \vvec{e}{:}\vvec{D}\ \varphi(\vvec{d},\vvec{e}) \wedge X_{a_{1}}(\wvec{f_{1}(\vvec{d},\vvec{e})}) \wedge \dots \wedge X_{a_{p}}(\wvec{f_{p}(\vvec{d},\vvec{e})})$
for $X_i$ constructs edges:
\[ X_i(\vvec{v}) \to (i,k,\vvec{v},\vvec{w}),\ \ 
   (i,k,\vvec{v},\vvec{w}) \to X_{a_{1}}(\wvec{f_{1}(\vvec{v},\vvec{w})}),\ \  \dots,\ \ (i,k,\vvec{v},\vvec{w}) \to X_{a_{p}}(\wvec{f_{p}(\vvec{v},\vvec{w})})
\]
for every $\vvec{v}\in \mathbb{D}^\ast$ and $\vvec{w}\in \mathbb{D}^\ast$ such that $\varphi(\vvec{v},\vvec{w})$ holds (see the figure below).
\[\small
 \xymatrix@R=1em{
 & *+[F-:<10pt>]{X_i(\vvec{v})} \ar[d]&\\
 & *+[F-]{(i,k,\vvec{v},\vvec{w})} \ar[ld] \ar@{}[d]|(.6){\dots} \ar[rd] &\\
 *+[F-:<10pt>]{X_{a_1}(\wvec{f_{1}(\vvec{v},\vvec{w})})} & \dots\dots & *+[F-:<10pt>]{X_{a_{p}}(\wvec{f_{p}(\vvec{v},\vvec{w})})}
 }
\]
From now on, we write dependency spaces to refer to the extended one by abbreviating ``extended''.

%
%
%
We formalize dependency spaces.
The dependency space for a given PBES $\Epsilon$ is a labeled directed graph $G = (V, E, \Pi)$ such that
\begin{itemize}
 \item $V = \sig(\Epsilon) \cup {\mathbb{N} \times \mathbb{N} \times \mathbb{D}^* \times \mathbb{D}^*}$ is a set of vertices,
 \item $E$ is a set of edges with $E \subseteq V \times V$, which is determined from the above discussion, and
 \item $\Pi : V \to \{\vee, \wedge\}$ is a function which assigns $\vee$/$\wedge$ to all vertices.
\end{itemize}

For the dependency of the PBES $\Epsilon_3$, the next graph is a proof graph of $X_1(2)$.
We get this proof graph by choosing $n'=0$ for every vertex.
 \[
 \xymatrix@R=1em{
 X_1(2) \ar[r]\ar[rd] & X_1(6) \ar[r]\ar[rd] & X_1(18) \ar[r]\ar[rd] & \dots \\
                      & X_1(8) \ar[r]\ar[rd] & X_1(24) \ar[r]\ar[rd] & \dots\\
					  &                      & X_1(32) \ar[r]        & \dots\\
 }
 \]
 We can see that this proof graph is obtained by removing some vertices and collapsing $\wedge$-vertices into the vertex $X_1(v)$
 from the dependency space of $\Epsilon_3$ (Figure~\ref{fig:depE3}).

We show that this property holds in general.
\begin{lemma}\label{lem:depsp2pg}
 For a given existential PBES, if there exist proof graphs of $X(\vvec{v})$, then one of them is obtained from its dependency space
 by removing some $\vee$/$\wedge$-vertices and collapsing $\wedge$-vertices $(i,k,\vvec{v},\vvec{w})$ into the vertex $X_i(\vvec{v})$.
\end{lemma}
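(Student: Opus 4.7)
The plan is to take any proof graph of $X(\vvec{v})$ guaranteed by hypothesis and refashion it into one that visibly arises from the dependency space by the prescribed operations. First I would fix an arbitrary proof graph $G_0 = \langle V_0, \to_0, \mathbbold{t}\rangle$ with $X(\vvec{v}) \in V_0$. Then, at every $X_i(\vvec{u}) \in V_0$, I would invoke condition~(1) of Definition~\ref{def:PG} together with the existential normal form of Proposition~\ref{prop:expbes} to select a clause index $k(i,\vvec{u})$ and a data witness $\vvec{w}(i,\vvec{u})$ such that the Boolean guard $\varphi_{i,k(i,\vvec{u})}(\vvec{u},\vvec{w}(i,\vvec{u}))$ evaluates to $\mathbbold{t}$ and every conjunct $X_{a_{ikj}}(\wvec{f_{ikj}(\vvec{u},\vvec{w}(i,\vvec{u}))})$ of the selected clause already lies in the postset of $X_i(\vvec{u})$ in $G_0$. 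Such choices must exist because $\varphi_i(\vvec{u})$ is a disjunction of existentially quantified conjunctions whose value is $\mathbbold{t}$ under the postset assumption.

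Next I would read off a subgraph $H$ of the dependency space from these choices: keep the $\vee$-vertices in $V_0$, include the $\wedge$-vertex $(i,k(i,\vvec{u}),\vvec{u},\vvec{w}(i,\vvec{u}))$ for each $X_i(\vvec{u}) \in V_0$, and retain the induced dependency-space edges. These $\wedge$-vertices genuinely belong to the dependency space precisely because the guards $\varphi_{i,k(i,\vvec{u})}(\vvec{u},\vvec{w}(i,\vvec{u}))$ hold. Collapsing each $\wedge$-vertex into its $\vee$-parent $X_i(\vvec{u})$ then yields a graph $G_1 = \langle V_0, \to_1, \mathbbold{t}\rangle$ in which $X_i(\vvec{u}) \to_1 X_{a_{ikj}}(\wvec{f_{ikj}(\vvec{u},\vvec{w}(i,\vvec{u}))})$ for each $j$. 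The lemma will be proved once I verify that $G_1$ is itself a proof graph containing $X(\vvec{v})$.

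Verifying the two proof-graph conditions on $G_1$ should be the routine part. Condition~(1) is immediate by construction: the postset of $X_i(\vvec{u})$ in $G_1$ coincides with the conjuncts of the chosen $k(i,\vvec{u})$-th clause at witness $\vvec{w}(i,\vvec{u})$, which makes that clause, and hence $\varphi_i(\vvec{u})$, true. Condition~(2) follows from $\to_1 \subseteq \to_0$, since every edge produced by the collapse step was already justified by condition~(1) in $G_0$; any infinite path in $G_1$ is therefore an infinite path in $G_0$ and inherits the minimum-rank condition. The main obstacle I anticipate is the first step's existence argument: one must spell out cleanly how Boolean evaluation of the existential normal form under a fixed signature assignment yields both the witness pair $(k(i,\vvec{u}),\vvec{w}(i,\vvec{u}))$ and the required containment of the resulting conjuncts in the postset of $X_i(\vvec{u})$ in $G_0$. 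After that, the remainder of the argument is bookkeeping.
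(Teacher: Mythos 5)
Your argument is correct, but it takes a more direct route than the paper. The paper derives Lemma~\ref{lem:depsp2pg} from the parity-game characterization (Lemma~\ref{lem:sec3} in the appendix): the forward direction of that lemma extracts, exactly as you do, a clause index $k$ and a witness $\vvec{w}$ for each vertex $X_i(\vvec{u})$ of a given proof graph (using condition~(1) and the normal form of Proposition~\ref{prop:expbes}) and turns this into a strategy for player $\circ$; the backward direction then rebuilds, from a winning strategy, a proof graph that is by construction a pruned-and-collapsed subgraph of the dependency space. You bypass the game-theoretic detour entirely: after choosing $(k(i,\vvec{u}),\vvec{w}(i,\vvec{u}))$ you shrink each postset to the conjuncts of the chosen clause and observe that condition~(1) survives because predicate variables occur only positively (so a smaller postset still satisfies the chosen disjunct), while condition~(2) is inherited from ${\to_1}\subseteq{\to_0}$ on the unchanged vertex set. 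This is self-contained and arguably cleaner for this particular lemma, since it avoids appealing to determinacy or to the correctness of strategy-induced plays, which the paper handles only implicitly (``It is obvious that the graph $P$ [has] the conditions of proof graphs''). What the paper's route buys in exchange is reuse: Lemma~\ref{lem:sec3} is needed anyway for Theorem~\ref{thm:rds-is-decidable} and the PGSolver-based implementation, so stating the correspondence once at the level of parity games serves both purposes. Your anticipated ``main obstacle'' (existence of the witness pair with conjuncts contained in the postset) is indeed the only nontrivial step, and it goes through exactly as you sketch; it is the same observation the paper makes at the start of its $\Rightarrow$ direction.
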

Thus in order to obtain a proof graph from the dependency space, we encounter the 
problem that chooses one out-going edge for each $\vee$-node so that the condition (2) of Definition~\ref{def:PG} is satisfied.
This problem corresponds to a problem known as parity games
(see Lemma~\ref{lem:sec3} in the appendix for details,
and parity games with finite nodes are decidable in NP. 
Moreover, there is a solver, named PGSolver~\cite{parityGame}, which efficiently solves many practical problems.
%

Unfortunately,
since dependency spaces have infinite vertices, it is difficult to apply parity game solvers.
Thus we need a way to reduce a dependency space to a finite one as shown in the next section.

\section{Reduced Dependency Space}\label{sec:reduced-depsp}

In this section, we extend reduced dependency spaces~\cite{Nagae2016} to those for existential PBESs.
We assume that an existential PBES $\Epsilon$ has the form of Proposition~\ref{prop:expbes}.

Given a PBES $\Epsilon$, we define functions $F_{ik}:\sig(\Epsilon) \to 2^B$ and $G_{ik}: B \to 2^{\sig(\Epsilon)}$ for each $k$-th clause for $X_i$ as follows,
where $B$ refers to $\mathbb{N} \times \mathbb{N} \times \mathbb{D}^* \times \mathbb{D}^*$:
\[\begin{array}{r@{}l}
F_{ik}(X_j(\vvec{v})) = &
 \begin{cases}
  \{(i,k,\vvec{v},\vvec{w}) \mid  \varphi_{ik}(\vvec{v},\vvec{w})\} & \text{if $i=j$}\\
  \emptyset & \text{otherwise}
 \end{cases}\\

 G_{ik}(j,k',\vvec{v},\vvec{w}) = &
 \begin{cases}
  \{X_{a_{ik1}}(\wvec{f_{ik1}(\vvec{v},\vvec{w})}), \dots, X_{a_{ikp_{ik}}}(\wvec{f_{ikp_{ik}}(\vvec{v},\vvec{w})})\} & \text{if $i = j \wedge k = k'$}\\
  \emptyset & \text{otherwise}
 \end{cases}
\end{array}\]
Intuitively, $F_{ik}$ is a function that takes a $\vee$-vertex $X_j(\vvec{v})$ and returns $\wedge$-vertices as the successors.
On the other hand,
$G_{ik}$ is a function that takes a $\wedge$-vertex $(j,k',\vvec{v},\vvec{w})$ and returns $\vee$-vertices as the successors.
In other words, $F_{ik}$ and $G_{ik}$ indicate the dependencies.

A reduced dependency space is a graph divided by the congruence relation on the algebra that contains operators $F_{ik}, G_{ik}$.
We formalize this relation.

\begin{definition}
 Let ${\sim}_D, {\sim}_B$ be an equivalence relation on $\sig(\Epsilon)$ and $B$ respectively.
 The pair of relations $\<{\sim}_D, {\sim}_B>$ is \emph{feasible} if all these conditions hold:
 \begin{itemize}
  \item For all $i,j \in \mathbb{N}$, if $i \not= j$ then $X_i(\vvec{v}) \not\sim_D X_j(\vvec{v'})$ for any $\vvec{v}, \vvec{v'} \in \mathbb{D}^*$.
  \item For all $i \in \mathbb{N}$ and $\vvec{v},\vvec{v'} \in \mathbb{D}^*$, if $X_i(\vvec{v}) \sim_D X_i(\vvec{v'})$ then
		$F_{ik}(X_i(\vvec{v})) \sim_B F_{ik}(X_i(\vvec{v'}))$ for any $k$.
  \item For all $i,j \in \mathbb{N}$, if $i \not= j$ or $k \not= k'$ then
		$(i,k,\vvec{v},\vvec{w}) \not\sim_B (j,k',\vvec{v'},\vvec{w'})$ for any $\vvec{v},\vvec{v'},\vvec{w},\vvec{w'} \in \mathbb{D}^*$.
  \item For all $(i,k,\vvec{v},\vvec{w}), (i,k,\vvec{v'},\vvec{w'}) \in B$,
		if $(i,k,\vvec{v},\vvec{w}) \sim_B (i,k,\vvec{v'},\vvec{w'})$ then
		$G_{ik}(i,k,\vvec{v},\vvec{w}) \sim_D G_{ik}(i,k,\vvec{v'},\vvec{w'})$.
 \end{itemize}
 Here, we extend the notion of an equivalence relation ${\sim}$ on some set $A$
 for the equivalence relation on $2^A$ in this way:
 \[
  \alpha, \beta \subseteq A.\ \alpha \sim \beta \iff
 \{[a]_{\sim} \mid a \in \alpha\} = \{[b]_{\sim} \mid b \in \beta\}
 \]
\end{definition}

We define a reduced dependency space using a feasible pair of relations and dependency space.
\begin{definition}
 Let $G=(V,E,\Pi)$ be a dependency space of $\Epsilon$.
 For a feasible pair $\<{\sim}_D, {\sim}_B>$ of relations,
 an equivalence relation ${\sim}_G$ on $V$ is defined as ${\sim_D} \cup {\sim_B}$.
 The \emph{reduced dependency space} for a given feasible pair of relations is $G' = (V / {\sim_G}, E', \Pi / {\sim_G})$, where
 \(
 E' = \{([v]_{\sim_G}, [w]_{\sim_G}) \mid (v,w) \in E\}
 \).
\end{definition}
Note that $\Pi/{\sim_G}$ is well-defined from the definition of ${\sim}_G$.

Next theorem states that
the membership problem is reduced to the problem finding a finite reduced dependency space.
\begin{theorem}\label{thm:rds-is-decidable}
 Given a finite reduced dependency space of a PBES,
 then the membership problem of the PBES is decidable.
\end{theorem}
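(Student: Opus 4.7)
The plan is to reduce the membership problem on the original PBES to a parity game on the finite reduced dependency space, using Theorem~\ref{thm:PBES2PG} together with Lemma~\ref{lem:depsp2pg} as the bridge between PBES membership and proof graphs. First, I would observe that the feasibility conditions on $\langle{\sim}_D,{\sim}_B\rangle$ guarantee that the reduced dependency space $G'=(V/{\sim_G},E',\Pi/{\sim_G})$ is well-defined and, crucially, that each equivalence class of $\vee$-vertices carries a well-defined predicate variable $X_i$ (hence a well-defined $\rank(X_i)$), while each class of $\wedge$-vertices carries a well-defined pair $(i,k)$. This lets us transport priorities unambiguously from $\sig(\Epsilon)$ to the quotient.

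Next, I would establish a correspondence between proof graphs in the (possibly infinite) dependency space $G$ and ``quotient proof structures'' in $G'$. In the easy direction, given any proof graph $\langle V_0,\rightarrow,\mathbbold{t}\rangle$ in $G$ (which exists iff $X_i(\vvec{v})$ holds, by Lemma~\ref{lem:depsp2pg} and Theorem~\ref{thm:PBES2PG}), projecting each vertex to its ${\sim_G}$-class yields a subgraph of $G'$ in which every $\vee$-class has at least one chosen outgoing edge, every $\wedge$-class has all its outgoing edges present, and every infinite path satisfies the parity condition of Definition~\ref{def:PG}(2). For the converse, starting from such a structure in $G'$ plus a positional winning strategy for $\vee$ in the induced parity game on the finite graph, I would unfold the strategy back into a proof graph on $G$: feasibility conditions, namely $X_i(\vvec{v})\sim_D X_i(\vvec{v'})\Rightarrow F_{ik}(X_i(\vvec{v}))\sim_B F_{ik}(X_i(\vvec{v'}))$ and its dual for $G_{ik}$, guarantee that at every concrete representative one can pick a witness $\wedge$-vertex and corresponding successors $X_{a_{ikj}}(\wvec{f_{ikj}(\vvec{v},\vvec{w})})$ inside the chosen classes, so the unfolding is consistent and satisfies Definition~\ref{def:PG}(1).

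Given this correspondence, the decision procedure is as follows: build the finite reduced dependency space (finiteness is the hypothesis), locate the class of $X_i(\vvec{v})$, assign priority $\rank(X_j)$ to every $\vee$-class of type $X_j$ (and a neutral priority to each $\wedge$-class), and solve the resulting finite parity game with $\vee$-vertices owned by Eve and $\wedge$-vertices owned by Adam. Since parity games on finite graphs are decidable, and since by the correspondence above Eve wins from $[X_i(\vvec{v})]_{\sim_G}$ iff there is a proof graph in $G$ containing $X_i(\vvec{v})$ iff $\interpreted{\Epsilon}(X_i)(\vvec{v})=\mathbbold{t}$, the membership problem is decided. The dual game (swap ownership and increase priorities by one) handles the $r=\mathbbold{f}$ case.

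The main obstacle will be verifying that the parity condition transfers faithfully under quotienting. An infinite path in $G'$ lifts only to an infinite walk in $G$, and one must check that the set of predicate variables occurring infinitely often on both sides agree, so that the ``minimum rank is even'' requirement of Definition~\ref{def:PG}(2) is preserved in both directions. This is exactly what the first feasibility clause, forbidding $X_i(\vvec{v})\sim_D X_j(\vvec{v'})$ whenever $i\neq j$, is there to buy us; the remaining delicate point is the coherent choice of $\wedge$-witnesses during the unfolding, which the $F_{ik}/G_{ik}$-closure conditions supply uniformly across each equivalence class.
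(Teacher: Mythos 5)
Your overall architecture (membership $\Leftrightarrow$ proof graph by Theorem~\ref{thm:PBES2PG}, then a finite parity game on the reduced space, solved by a parity-game algorithm) matches the paper, and your converse direction -- unfolding a positional winning strategy of the $\vee$-player on the quotient back into a concrete proof graph, using the feasibility clauses to realize each abstract edge from every concrete representative -- is essentially the paper's Proposition~\ref{prop:congr} combined with Lemmas~\ref{lem:sec3} and~\ref{lem:dseqrds}($\Leftarrow$). The forward direction, however, has a genuine gap. You claim that projecting a proof graph to its $\sim_G$-classes yields a subgraph of $G'$ in which \emph{every} infinite path satisfies condition~(2) of Definition~\ref{def:PG}. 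This does not follow: an infinite path in the quotient only lifts to a walk that may hop between different representatives of the same class, each of which made a different choice of clause $k$ and witness $\vvec{w}$ in the proof graph, so the lifted walk need not stay inside (or even resemble) any path of the proof graph. Nothing in the feasibility conditions forbids two equivalent vertices whose proof-graph continuations pass through different regions, and mixing them can close a cycle in the quotient whose minimum rank is odd even though every concrete path of the proof graph is good. The first feasibility clause only makes ranks well-defined on classes; it does not exclude such spurious mixed paths, so your projected structure need not witness a win for the $\vee$-player, and the step ``membership $\Rightarrow$ Eve wins on $[X_i(\vvec{v})]_{\sim_G}$'' is unsupported.

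The paper closes exactly this direction differently: it never projects a proof graph. Instead, Lemma~\ref{lem:dseqrds}($\Rightarrow$) argues by contraposition, invoking determinacy of parity games (Proposition~\ref{prop:pg-determinacy}) to obtain a winning strategy for the $\Box$-player on the quotient when the $\circ$-player loses there, and then pulls that strategy back to the infinite dependency space via the edge-realization property of Proposition~\ref{prop:congr}; since priorities are class-invariant, the pulled-back strategy is winning, so the $\circ$-player also loses concretely and no proof graph exists. You need either this determinacy-plus-pullback argument or some other genuine transfer of the $\circ$-player's win from the concrete game to the quotient; the projection of the proof graph alone will not do. (A minor separate point: with priorities taken literally as $\rank(X_j)$ you must use a min-parity winning condition, or flip them as the paper does with $\Omega(v)=u-\rank(X_i)$, so that ``minimum rank even'' becomes the standard ``largest priority occurring infinitely often is even''.)
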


\section{Construction of Reduced Dependency Spaces}\label{sec:gameconst}
In this section, we propose a procedure to construct a feasible pair of relations, i.e., reduced dependency spaces, whose basic idea follows the one in \cite{Nagae2016}.
This seems to be similar to minimization algorithm of automata, but the main difference is on that vertices are infinitely many.
Here we use a logical formula to represent (possibly) infinitely many vertices in a single vertex.
We start from the most degenerated vertices, which corresponds to
a pair $\<{\sim}_D, {\sim}_B>$ of coarse equivalence relations, 
and divide each vertex until the pair becomes feasible.
More specifically, we start from the $\vee$-vertices
$\{X_1(\vvec{v})\mid \vvec{v}\in\mathbb{D}^*\}, \dots, \{X_n(\vvec{v})\mid \vvec{v}\in\mathbb{D}^*\}$
and $\wedge$-vertices
$\{(1,1,\vvec{v},\vvec{w})\mid \vvec{v},\vvec{w}\in\mathbb{D}^*\}, \dots, \{(n,m_n,\vvec{v},\vvec{w}) \mid \vvec{v},\vvec{w}\in\mathbb{D}^*\}$.
The procedure keeps track of a partition of the set $\mathbb{D}^*$
for each $i\in\{1,\ldots,n\}$ and a partition of the set
$\mathbb{D}^* \times \mathbb{D}^*$ for each $i \in \{1,\dots,n\}, k \in \{1,\dots,m_i\}$,
and makes partitions finer.


Recall that a \emph{partition} of a set $A$ is a family $\Phi$ of sets satisfying
$\bigcup_{\phi \in \Phi}\phi = A$ and
$\forall \phi, \phi' \in \Phi.\ \phi \neq \phi' \implies \phi \cap \phi' = \emptyset$.
For a given PBES $\Epsilon$, we call a family $\mathcal{P}$ of partitions is a
\emph{partition family of $\Epsilon$}
if $\mathcal{P}$ has the form $\<\Phi_1,\dots,\Phi_n,\Psi_{11},\dots,\Psi_{nm_n}>$,
and satisfies the following conditions:
\begin{itemize}
 \item $\Phi_i$ is a partition of $\mathbb{D}^*$ for every $i$, and
 \item $\Psi_{ik}$ is a partition of $\mathbb{D}^* \times \mathbb{D}^*$ for every $i,k$.
\end{itemize}
Every element of a partition family $\mathcal{P}$ is a partition of $\mathbb{D}^*$ or $\mathbb{D}^* \times \mathbb{D}^*$,
hence we naturally define an equivalence relation 
${\sim}^D_{\mathcal{P}}$ on $\mathbb{D}^*$
and ${\sim}^B_{\mathcal{P}}$ on $B$.

We define a function $H$ that takes a partition family and returns another partition family
obtained by doing necessary division operations to its elements.
The procedure repeatedly applies $H$ to the initial partition family until it
saturates.  If it halts, the resulting tuple induces a reduced dependency space.
In the procedure,
functions $\mathbb{D}^* \to \mathbb{B}$ (resp.\ $\mathbb{D}^* \times \mathbb{D}^* \to \mathbb{B}$) represented by Boolean expressions with lambda binding
are used to represent an infinite subset of a data domain $\mathbb{D}^*$ (resp.\ $\mathbb{D}^* \times \mathbb{D}^*$),
In other words,
a function $f = \lambda \vvec{d}{:}D^*. \phi$ (resp.\ $g = \lambda (\vvec{d},\vvec{e}){:}D^*\times D^*. \phi$) can be regarded as a set
$\{\vvec{v} \in \mathbb{D}^* \mid f(\vvec{v}) = \mathbbold{t}\}$
(resp. $\{(\vvec{v},\vvec{w}) \in \mathbb{D}^* \times \mathbb{D}^* \mid g(\vvec{v},\vvec{w}) = \mathbbold{t}\}$).
In the sequel, we write Boolean functions for the corresponding sets.

The division function $H$ consists of two steps, the division of $\Phi$ and $\Psi$.
We give an intuitive explanation of the division $\Phi = \{\mathbb{N}\}$
  by a set $f = \lambda d{:}N.\ \exists e{:}N\ d+e < 10$, where assuming that $d+e<10$ appears in a PBES as $\varphi_{ik}(d,e)$.
Recall that the function $F_{ik}$ is defined by $F_{ik}(X_i(v)) = \{(i,k,v,w) \mid \varphi_{ik}(v,w)\}$
and the parameter $e$ is quantified by $\exists$ in existential PBESs.
We have to divide the data domain $\mathbb{N}$ into its intersection with $f$ and the rest,
i.e., $\mathbb{N}\cap f = \{v \in \mathbb{N} \mid v < 10\}$ and $\mathbb{N}\cap \overline{f} = \{v \in \mathbb{N} \mid v \geq 10\}$, where $\overline{f}$ denotes the complement of a set $f$.
This division is necessary
because the feasibility condition requires that the mapped values of $F_{ik}$ are also 
in the same set, and hence
we must separate $v \in \mathbb{N}$
according to
whether $F_{ik}(X_i(v))$ is empty or not.

Next, suppose $\Phi$ is divided into two blocks $\{v \mid v < 10\}$ and $\{v \mid v\geq 10\}$ in the first step.
We assume that a formula $X(d+e)$
appears in the predicate formula of a PBES.
Then, we have to divide $\Psi = \{\mathbb{N} \times \mathbb{N}\}$ into $\{(v,w) \mid v+w < 10\}$ and $\{(v,w) \mid v+w \geq 10\}$.
This is because the feasibility condition requires that the mapped values of $G_{ik}$ are also
in the same set.

Now we formalize these operations.
We must divide each set $\phi$ in a partition $\Phi$ according to a set $\psi$ in a partition $\Psi$, and similarly
divide each set $\psi$ in a partition $\Psi$ according to 
 a set $\phi$ in a partition $\Phi$.
For the definition, we prepare some kind of the inverse operation for $F_{ik}$ and $G_{ik}$.
\[
 \begin{array}{lll}
  F'_{ik}(\psi) &=& \{ \vvec{v}
             \mid (i,k,\vvec{v},\vvec{w})\in F_{ik}(X_i(\vvec{v})),\,
                  (\vvec{v},\vvec{w})\in \psi \}\\
  G'_{ik}(\phi) &=& \{ (\vvec{v},\vvec{w})
             \mid G_{ik}(i,k,\vvec{v},\vvec{w}) 
                  \cap \phi \neq \emptyset \}
 \end{array}
\]
Then the division operations are given as follows:
\[
 \begin{array}{lll}
  \Phi \otimes_{ik}^D \psi &=& 
    \{ F'_{ik}(\psi) \cap \phi \mid \phi \in \Phi\}
    \cup
    \{ \overline{F'_{ik}(\psi)} \cap \phi 
        \mid \phi \in \Phi\}\\
  \Psi \otimes_{ik}^B \phi &=& 
    \{ G'_{ik}(\phi) \cap \psi \mid \psi \in \Psi\}
    \cup
    \{ \overline{G'_{ik}(\phi)} \cap \psi 
        \mid \psi \in \Psi\}\\
 \end{array}
\]
%
The operator ${\otimes_{ik}^D}$ obviously satisfies $(\Phi \otimes_{ik}^D \psi_1) \otimes_{ik}^D \psi_2 = (\Phi \otimes_{ik}^D \psi_2) \otimes_{ik}^D \psi_1$, thus
we can naturally extend it on sets of formulas as follows:
\[
 \Phi \otimes_{ik}^D \{\psi_1,\dots,\psi_p\} = \Phi \otimes_{ik}^D \psi_1 \otimes_{ik}^D \cdots \otimes_{ik}^D \psi_p
\]
Also, it is easily shown that if $\Phi$ is a partition of $\mathbb{D}^*$, then $\Phi \otimes_{ik}^D \Psi$ is also a partition for a set $\Psi'$ of formulas.
These facts are the same in the case of operator ${\otimes_{ik}^B}$.

We unify these operators in a function that refines a given partition family.
\begin{definition}\label{def:proc_step}
 Let $\mathcal{P}$ be $\<\Phi_1,\ldots,\Phi_n,\Psi_{11},\ldots,\Psi_{nm_n}>$.
 The partition functions $H_{ik}^D, H_{ik}^B$ for each $i$ and $k$ are defined as follows:
 \[
 \begin{array}{rcl}
  H_{ik}^D(\mathcal{P}) &=& \<\ldots,\Phi_{i-1},\Phi_i \otimes_{ik}^D \Psi_{ik}, \Phi_{i+1}, \dots>\\
  H_{ik}^B(\mathcal{P}) &=& \<\ldots,\Psi_{i(k-1)},\Psi_{ik} \otimes_{ik}^B \Delta_D, \Psi_{i(k+1)}, \dots>\\
  \Delta_D &=& 
   \bigcup_{1\leq j \leq p_{ik}}
   \{\{X_{a_{ikj}}(\vvec{v}) \mid \vvec{v} \in \phi\}
   \mid \phi \in \Phi_{a_{ikj}}\}
 \end{array}
 \]
 We bundle these functions as $H^D = H_{nm_n}^D \circ \dots \circ H_{11}^D, H^B = H_{nm_n}^B \circ \dots \circ H_{11}^B$ and $H = H^B \circ H^D$
 by composition $\circ$.
\end{definition}

We define the partition procedure that applies the partition function $H$ to the trivial partition family
$\mathcal{P}_0 = \<\{\mathbb{D}^*\},\dots,\{\mathbb{D}^*\},\{\mathbb{D}^* \times \mathbb{D}^*\},\dots,\{\mathbb{D}^*\times \mathbb{D}^*\}>$ until it saturates.
We write the family of the partitions obtained from the procedure as $H^{\infty}(\mathcal{P}_0)$.

\begin{example}\label{ex:partition}
 Consider the existential PBES $\Epsilon_4$ given as follows:
  \[
  \begin{array}[t]{rcl}
   \nu X_1(n{:}N)& = & \exists n'{:}N\ \mathrm{even}(n) \wedge X_1(3n+5n') \wedge X_1(4n+5n')
  \end{array}
  \]
The reduced dependency space for $\Epsilon_4$ is:
\[\small
 \xymatrix@R=1em{
 *+[F-:<10pt>]{X_1(N_0)} \ar[dd] \ar[rdd]& *+[F-:<10pt>]{X_1(N_1)}\\
 \\
 *+[F-:]{B_{00}} \ar@<1ex>[uu]& *+[F-:]{B_{01}} \ar[uu] & *+[F-:]{B_{10}}\ar[lluu]\ar[luu] & *+[F-:]{B_{11}}\ar[llluu]\ar[lluu]\\
 }
 \]
where $N_0 = \{0,2,\dots\}, N_1 = \{1,3,\dots\}$ and $B_{ik} = \{(1,1,d,e) \mid d \equiv_2 i \wedge e \equiv_2 k\}$ for each $i \in \{0,1\}$ and $k \in \{0,1\}$.

 In order to construct this, we apply $H$ to the initial partition $\<\{\lambda n{:}N.true\},\{\lambda(n,n'){:}N^2.true\}>$.
 First, we apply $H^D$:
 \[
 \begin{array}{rcl}
  H^D(\<\{\lambda n{:}N.true\},\{\lambda(n,n'){:}N^2.true\}>) &=& (\<\{\lambda n{:}N.true\} \otimes_{11}^D \{\lambda n{:}N.\mathrm{even}(n)\},\{\lambda(n,n'){:}N^2.true\}>)\\
  &=& \<\{\lambda n{:}N. \mathrm{even}(n), \lambda n{:}N. \lnot\mathrm{even}(n)\},\{\lambda(n,n'){:}N^2.true\}>
 \end{array}
 \]
 We write $\{\lambda n{:}N. \mathrm{even}(n), \lambda n{:}N. \lnot\mathrm{even}(n)\}$ as $\Phi$ for readability.
 Next, we apply $H^B$:
 \[
 \begin{array}{r@{~}l@{~}l}
  H^B(\langle\Phi,\{\lambda(n,n'){:}N^2.true\}\rangle) =& \<\Phi,&\{\lambda(n,n'){:}N^2.true\} \otimes_{11}^B \Phi>\\
  =& \langle\Phi,&
   \{
   \lambda (n,n'){:}N^2. \mathrm{even}(n)\wedge\mathrm{even}(n'),
   \lambda (n,n'){:}N^2. \mathrm{even}(n)\wedge\lnot\mathrm{even}(n'),\\
  && \lambda (n,n'){:}N^2. \lnot\mathrm{even}(n)\wedge\mathrm{even}(n'),
   \lambda (n,n'){:}N^2. \lnot\mathrm{even}(n)\wedge\lnot\mathrm{even}(n')
   \}\rangle
 \end{array}
 \]
 The resulting partition family is a fixed-point of $H$, and hence
   the procedure stops.
 This partition family induces the set of vertices in the reduced dependency space.
\end{example}

We show that the procedure returns a partition family which induces a feasible pair of relations, i.e.,  reduced dependency space.
\begin{restatable}{theorem}{validityproc}\label{thm:proc}
 Suppose the procedure terminates and returns a partitions family $\mathcal{P} = H^{\infty}(\mathcal{P}_0)$.
 Then, the pair ${\sim}^D_{\mathcal{P}}$ and ${\sim}^B_{\mathcal{P}}$ is feasible.
\end{restatable}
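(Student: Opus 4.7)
The plan is to verify each of the four feasibility conditions for the pair $\<{\sim}^D_\mathcal{P}, {\sim}^B_\mathcal{P}>$ at the saturated partition family $\mathcal{P}=H^\infty(\mathcal{P}_0)$. Conditions~(1) and~(3) (the ``distinct index'' conditions) I would dispatch first: they are immediate from the construction, because the partition family consists of a separate $\Phi_i$ for each predicate variable index~$i$ and a separate $\Psi_{ik}$ for each clause $(i,k)$, so two signatures with differing~$i$ (resp.\ differing $(i,k)$) automatically lie in distinct blocks and are never identified by ${\sim}^D_\mathcal{P}$ (resp.\ ${\sim}^B_\mathcal{P}$).

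For conditions~(2) and~(4), I would first argue that termination of the procedure yields not only $H(\mathcal{P})=\mathcal{P}$ but also $H^D(\mathcal{P})=\mathcal{P}$ and $H^B(\mathcal{P})=\mathcal{P}$ \emph{separately}. This follows because $H^D$ and $H^B$ are refinement operators in the partition-refinement order, so $H^B\circ H^D(\mathcal{P})=\mathcal{P}$ forces both steps to be trivial (any proper refinement at the $H^D$ step could not be undone by the subsequent $H^B$ step). Consequently, for every $i,k$ the fixed-point equalities $\Phi_i\otimes_{ik}^D\Psi_{ik}=\Phi_i$ and $\Psi_{ik}\otimes_{ik}^B\Delta_D=\Psi_{ik}$ hold.

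I would then derive condition~(2) by unfolding the first of these equalities: every block $\phi\in\Phi_i$ is either contained in or disjoint from $F'_{ik}(\psi)$ for every $\psi\in\Psi_{ik}$. Hence if $X_i(\vvec{v})\sim^D_\mathcal{P} X_i(\vvec{v'})$, i.e., $\vvec{v}$ and $\vvec{v'}$ share a block $\phi$, then for every $\psi\in\Psi_{ik}$ the condition ``there is a witness $\vvec{w}$ with $\varphi_{ik}(\vvec{v},\vvec{w})$ and $(\vvec{v},\vvec{w})\in\psi$'' holds for $\vvec{v}$ iff it holds for $\vvec{v'}$. This is exactly $F_{ik}(X_i(\vvec{v}))\sim_B F_{ik}(X_i(\vvec{v'}))$ under the canonical lift of ${\sim}_B$ to power sets. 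Condition~(4) I would handle symmetrically from $\Psi_{ik}\otimes_{ik}^B\Delta_D=\Psi_{ik}$, using $G'_{ik}$ in place of $F'_{ik}$ and the fact that $\Delta_D$ enumerates, for each conjunct position~$j$, every $\Phi_{a_{ikj}}$-block lifted to a set of signatures.

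The hardest step will be the bookkeeping in the last claim: one must check that the partition-theoretic dichotomy ``for every $\sigma\in\Delta_D$, the block $\psi$ is contained in or disjoint from $G'_{ik}(\sigma)$'' is exactly what is needed to make the $\sim_D$-projections $\{[X_{a_{ikj}}(\wvec{f_{ikj}(\vvec{v},\vvec{w})})]_{\sim_D} \mid 1\le j\le p_{ik}\}$ agree for all $(\vvec{v},\vvec{w})$ in the same $\psi$-block. This hinges on the observation that $\Delta_D$ covers every equivalence class that can possibly occur in some $G_{ik}(i,k,\vvec{v},\vvec{w})$, so each such class is individually tested by the dichotomy; together with the set-equality-of-projections definition of ${\sim}_D$ on power sets, this yields condition~(4).
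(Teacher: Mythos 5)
Your proposal is correct and takes essentially the same route as the paper: the paper proves the contrapositive, assuming a violation of the $F_{ik}$- or $G_{ik}$-condition and showing that ${\otimes}_{ik}^D$ or ${\otimes}_{ik}^B$ would then properly split a block, contradicting $H(\mathcal{P})=\mathcal{P}$, which is exactly your block/dichotomy argument read backwards. Your explicit dispatch of the two index-distinctness conditions, the observation that $H(\mathcal{P})=\mathcal{P}$ forces $H^D(\mathcal{P})=\mathcal{P}$ and $H^B(\mathcal{P})=\mathcal{P}$ separately, and the remark that $\Delta_D$ covers every $\sim_D$-class reachable via $G_{ik}$ merely make explicit what the paper's proof uses implicitly.
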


\section{Implementation and an Example: Downsized McCarthy 91 Function}
This section states implementation issues of
the procedure presented in Section~\ref{sec:gameconst} and
a bit more complex example, which is inspired by the McCarthy 91 function.

We describe the overview of our implementation.
For a given PBES, the first step calculates a partitions family $\mathcal{P} = H^\infty(\mathcal{P}_0)$
  by repeatedly applying the function $H$ in Definition~\ref{def:proc_step}.
  This step requires a lot of SMT-solver calls.
  We'll explain the implementation in the next paragraph.
  Once the procedure terminates, the obtained partitions family $\mathcal{P}$ determines a feasible pair ${\sim}^D_{\mathcal{P}}$ and ${\sim}^B_{\mathcal{P}}$ by Theorem~\ref{thm:proc}.
  Considering the finite dependency space constructed from the pair as a parity game, the second step constructs a proof graph by using PGSolver, which is justified by the proof of Theorem~\ref{thm:rds-is-decidable}.
(See Lemma~\ref{lem:sec3} and Lemma~\ref{lem:dseqrds} stating that
the existence of a proof graph can be checked by solving
a parity game on a reduced dependency space.)

The key to the implementation of $H^\infty(\mathcal{P}_0)$
is the operators ${\otimes}_{ik}^D$ and ${\otimes}_{ik}^B$ used in the function $H$.
We focus on this and discuss how to implement these operators.
We use Boolean expressions with lambda binding to represent subsets of data domains $\mathbb{D}^\ast$ and $\mathbb{D}^\ast \times \mathbb{D}^\ast$ as used for the intuitive explanation of the procedure and Example~\ref{ex:partition}.
Then it seems as if it would be simple to implement the procedure.
It, however, induces non-termination without help of SMT solvers.
Let us look more closely at the division operation $\Phi \otimes_{ik}^D \psi$.
Suppose that $\lambda (\vvec{d},\vvec{e}){:}D^*\times D^*. \hat\psi$ is given as an argument of $F'_{ik}$.  The resulting function is presented as
$\lambda \vvec{d}{:}D^*.\hat\eta$ where
\(
  \hat\eta = \exists \vvec{e}{:}D^* (\varphi_{ik}(\vvec{d},\vvec{e}) \wedge \hat\psi).
\)
By using a set of functions to represent a partition $\Phi$,
each division of $\lambda \vvec{d}{:}D^*.\hat\phi$ in $\Phi$ by 
$F'_{ik}(\psi)$ is simply implemented;
it produces two functions $\lambda \vvec{d}{:}D^*.(\hat\eta \wedge \hat\phi)$ and $\lambda \vvec{d}{:}D^*.(\neg\hat\eta \wedge \hat\phi)$.
This simple symbolic treatment always causes non-termination of the procedure without removing an empty set from the partition.  Since the set represented by a function $\lambda \vvec{d}{:}D^*.\hat\phi$ is empty if and only if $\hat\phi$ is unsatisfiable, this can be done by using an SMT solver.
An incomplete unsatisfiability check easily causes a non-termination of the procedure, even if the procedure with complete unsatisfiability check terminates.
Thus, the unsatisfiability check of Boolean expressions is one of the most important issues in implementing the procedure.
For instance, the examples illustrated in this paper are all in the class of Presburger arithmetic, which is the first-order theory of the natural numbers with addition.
It is known that the unsatisfiability check of Boolean expressions in this class is decidable~\cite{presburger1931vollstandigkeit}.

Consider the following function $F$ on $\mathbb{N}$ determined by a given $a\in\mathbb{N}$:
\[
 F(n) = \begin{cases}
		 n - 1 & \text{if $n > a$}\\
		 F(F(n+2)) & \text{if $n \leq a$}\\
		\end{cases}
\]
The function $F(n)$ returns $n-1$ if $n > a$, and returns $a$ otherwise.
The latter property $F(n) = a$ for $n \leq a$
can be proved by induction on $n - k$ where $k \in \mathbb{N}$.

For an instance $a = 3$, this
function can be modeled by the following existential PBES:
\[
\begin{array}{rcl}
 \mu M(x{:}N,y{:}N) &=& (x > 3 \wedge y+1 = x \wedge X_T)
  \vee (\exists e{:}N\ x \leq 3 \wedge M(x+2,e) \wedge M(e,y))\\
 \nu X_T &=& X_T
\end{array}
\]
where $X_T$ is a trivial predicate variable which denotes true.

To understand this modeling, we consider the case $x = 0$.
In this case, because the first clause does not hold for any $y$,
$M(0,y)$ holds only if $M(2,e)$ and $M(e,y)$ hold for some $e$.
This implies $y = F(0) \iff \exists e{:}N\ e = F(0+2) \wedge y = F(e)$.
In addition, in the case $x > 3$, $M(x,y)$ holds if $y+1 = x$,
that is equivalent to $y = F(x)$.
From this consideration, $M(x,y)$ holds if and only if $y = F(x)$.

To solve this example, we have implemented the procedure, which uses SMT solver Z3~\cite{DeMoura:2008:ZES:1792734.1792766} for deciding the emptiness of sets in the division.
We attempted to solve the above PBES,
and got the reduced dependency space consisting of 65 nodes in a few seconds.
A proof graph is immediately found from the resulting space by applying PGSolver~{\cite{parityGame}}.
Figure~\ref{fig:f3-depsp} displays a part of the obtained graph
consisting of vertices where $M(x,y)$ holds.
\begin{figure}[ht]
 \center
 \includegraphics[width=15cm]{./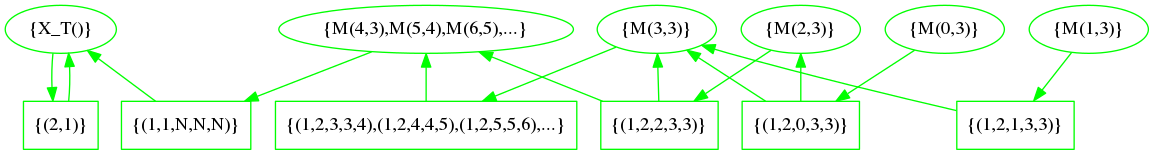}
 \caption{A part of the reduced dependency space where $M(x,y)$ holds.}
 \label{fig:f3-depsp}
\end{figure}
Although
a proof graph induced from the reduced dependency space is finite,
the reduced dependency space is nevertheless useful because the search space is infinite.
We also tried to solve a larger instance $a=10$ and got the spaces consisting of 394 nodes in 332 seconds,
in which Z3 solver spent 325 seconds.

For another example, a disjunctive PBESs for a trading problem~\cite{Nagae2016} is successfully solved by our implementation in a second, which produced
a space consisting 12 nodes.
Note that disjunctive PBESs is a subclass of existential PBESs.

    In contrast, the procedure does not halt for the PBES $\Epsilon_2$ in Example~\ref{eg:PBES}, nor even for the next simple PBES:
    \[
    \mu X(d{:}N) \,=\, (d = 0) \vee (d > 0 \wedge X(d-1))
    \]
    where its solution is $X = \mathbb{N}$.
Our procedure starts from the entire set $\mathbb{N}$ and divides it into $\{0\}$ and $\{1,2,\dots\}$ because of the first clause.
After that, the latter set is split into $\{1\}$ and $\{2,3,\dots\}$ using the second clause.
Endlessly, the procedure splits the latter set into the minimum number and the others.
  From this observation, the feasibility condition on $\<\sim_D,\sim_B>$ may be too strong, and weaker one is promising.

\section{Conclusion}
We have extended reduced dependency spaces for existential PBESs, and
have shown that a proof graph is obtained from the space by solving parity games
if the space is finite.
Reduced dependency spaces are valuable because the dependency spaces for most of PBESs are infinite, but existential PBESs may have a finite reduced dependency space.
We also have shown a procedure to construct reduced dependency spaces and have shown the correctness.
We have shown some examples including a downsized McCarthy 91 function is successfully characterized by our method by applying an implementation.

Reduced dependency space is defined so that it contains all minimal proof graphs.
For a membership problem to obtain a proof graph which proves $X(\vvec{v})$,
a reduced space may require too many division to make the entire data domain consistent.
This sometimes induces the loss of termination of the procedure.
Proposing a more clever procedure is one of the future works.
Moreover, our implementation relies on the shape of PBESs, not sets defined by them.
This is indicated by the example in the last of section~6.
To clarify these conditions is also our future works.

%

\nocite{*}
\bibliographystyle{eptcs}

\begin{thebibliography}{10}
\providecommand{\bibitemdeclare}[2]{}
\providecommand{\surnamestart}{}
\providecommand{\surnameend}{}
\providecommand{\urlprefix}{Available at }
\providecommand{\url}[1]{\texttt{#1}}
\providecommand{\href}[2]{\texttt{#2}}
\providecommand{\urlalt}[2]{\href{#1}{#2}}
\providecommand{\doi}[1]{doi:\urlalt{http://dx.doi.org/#1}{#1}}
\providecommand{\bibinfo}[2]{#2}

\bibitemdeclare{inproceedings}{Chen:2007:ECI:2392200.2392211}
\bibitem{Chen:2007:ECI:2392200.2392211}
\bibinfo{author}{T.~\surnamestart Chen\surnameend},
  \bibinfo{author}{B.~\surnamestart Ploeger\surnameend},
  \bibinfo{author}{J.~\surnamestart van~de Pol\surnameend} \&
  \bibinfo{author}{T.~A.~C. \surnamestart Willemse\surnameend}
  (\bibinfo{year}{2007}): \emph{\bibinfo{title}{Equivalence Checking for
  Infinite Systems Using Parameterized Boolean Equation Systems}}.
\newblock In: {\sl \bibinfo{booktitle}{Proceedings of the 18th International
  Conference on Concurrency Theory (CONCUR'07)}}, {\sl \bibinfo{series}{LNCS}}
  \bibinfo{volume}{4703}, \bibinfo{publisher}{Springer},
  \bibinfo{address}{Berlin, Heidelberg}, pp. \bibinfo{pages}{120--135},
  \doi{10.1007/978-3-540-74407-8\_9}.

\bibitemdeclare{article}{Clarke:2003:CAR:876638.876643}
\bibitem{Clarke:2003:CAR:876638.876643}
\bibinfo{author}{E.~\surnamestart Clarke\surnameend},
  \bibinfo{author}{O.~\surnamestart Grumberg\surnameend},
  \bibinfo{author}{S.~\surnamestart Jha\surnameend},
  \bibinfo{author}{Y.~\surnamestart Lu\surnameend} \&
  \bibinfo{author}{H.~\surnamestart Veith\surnameend} (\bibinfo{year}{2003}):
  \emph{\bibinfo{title}{Counterexample-guided Abstraction Refinement for
  Symbolic Model Checking}}.
\newblock {\sl \bibinfo{journal}{Journal of ACM}}
  \bibinfo{volume}{50}(\bibinfo{number}{5}), pp. \bibinfo{pages}{752--794},
  \doi{10.1145/876638.876643}.

\bibitemdeclare{book}{Clarke:2000:MC:332656}
\bibitem{Clarke:2000:MC:332656}
\bibinfo{author}{E.~M. \surnamestart Clarke\surnameend, Jr.},
  \bibinfo{author}{O.~\surnamestart Grumberg\surnameend} \&
  \bibinfo{author}{D.~A. \surnamestart Peled\surnameend}
  (\bibinfo{year}{1999}): \emph{\bibinfo{title}{Model Checking}}.
\newblock \bibinfo{publisher}{MIT Press}, \bibinfo{address}{Cambridge, MA,
  USA}.

\bibitemdeclare{inproceedings}{Cranen2013}
\bibitem{Cranen2013}
\bibinfo{author}{S.~\surnamestart Cranen\surnameend},
  \bibinfo{author}{B.~\surnamestart Luttik\surnameend} \&
  \bibinfo{author}{T.~A.~C. \surnamestart Willemse\surnameend}
  (\bibinfo{year}{2013}): \emph{\bibinfo{title}{Proof Graphs for Parameterised
  Boolean Equation Systems}}.
\newblock In: {\sl \bibinfo{booktitle}{Proc. of the 24th International
  Conference on Concurrency Theory, (CONCUR 2013)}}, {\sl
  \bibinfo{series}{LNCS}} \bibinfo{volume}{8052},
  \bibinfo{publisher}{Springer}, pp. \bibinfo{pages}{470--484},
  \doi{10.1007/978-3-642-40184-8\_33}.

\bibitemdeclare{inproceedings}{DeMoura:2008:ZES:1792734.1792766}
\bibitem{DeMoura:2008:ZES:1792734.1792766}
\bibinfo{author}{L.~\surnamestart De~Moura\surnameend} \&
  \bibinfo{author}{N.~\surnamestart Bj{\o}rner\surnameend}
  (\bibinfo{year}{2008}): \emph{\bibinfo{title}{Z3: An Efficient SMT Solver}}.
\newblock In: {\sl \bibinfo{booktitle}{Proc. of the 14th International
  Conference on Tools and Algorithms for the Construction and Analysis of
  Systems (TACAS'08)}}, {\sl \bibinfo{series}{LNCS}} \bibinfo{volume}{4963},
  \bibinfo{publisher}{Springer}, \bibinfo{address}{Berlin, Heidelberg}, pp.
  \bibinfo{pages}{337--340}, \doi{10.1007/978-3-540-78800-3\_24}.

\bibitemdeclare{inproceedings}{emerson1991tree}
\bibitem{emerson1991tree}
\bibinfo{author}{E.~A. \surnamestart Emerson\surnameend} \&
  \bibinfo{author}{C.~S. \surnamestart Jutla\surnameend}
  (\bibinfo{year}{1991}): \emph{\bibinfo{title}{Tree automata, mu-calculus and
  determinacy}}.
\newblock In: {\sl \bibinfo{booktitle}{Foundations of Computer Science, 1991.
  Proceedings, 32nd Annual Symposium on}}, \bibinfo{organization}{IEEE}, pp.
  \bibinfo{pages}{368--377}, \doi{10.1109/SFCS.1991.185392}.

\bibitemdeclare{manual}{parityGame}
\bibitem{parityGame}
\bibinfo{author}{O.~\surnamestart Friedmann\surnameend} \&
  \bibinfo{author}{M.~\surnamestart Lange\surnameend} (\bibinfo{year}{2017}):
  \emph{\bibinfo{title}{The PGSolver Collection of Parity Game Solvers}}.
\newblock
  \urlprefix\url{https://github.com/tcsprojects/pgsolver/blob/master/doc/pgsolver.pdf}.

\bibitemdeclare{book}{Gradel:2005:FMT:1206819}
\bibitem{Gradel:2005:FMT:1206819}
\bibinfo{author}{E.~\surnamestart Gr\"{a}del\surnameend},
  \bibinfo{author}{P.~G. \surnamestart Kolaitis\surnameend},
  \bibinfo{author}{L.~\surnamestart Libkin\surnameend},
  \bibinfo{author}{M.~\surnamestart Marx\surnameend},
  \bibinfo{author}{J.~\surnamestart Spencer\surnameend}, \bibinfo{author}{M.~Y.
  \surnamestart Vardi\surnameend}, \bibinfo{author}{Y.~\surnamestart
  Venema\surnameend} \& \bibinfo{author}{S.~\surnamestart Weinstein\surnameend}
  (\bibinfo{year}{2007}): \emph{\bibinfo{title}{Finite Model Theory and Its
  Applications}}.
\newblock \bibinfo{series}{Texts in Theoretical Computer Science. An EATCS
  Series}, \bibinfo{publisher}{Springer}, \doi{10.1007/3-540-68804-8}.

\bibitemdeclare{inproceedings}{Graf1997}
\bibitem{Graf1997}
\bibinfo{author}{S.~\surnamestart Graf\surnameend} \&
  \bibinfo{author}{H.~\surnamestart Saidi\surnameend} (\bibinfo{year}{1997}):
  \emph{\bibinfo{title}{Construction of Abstract State Graphs with PVS}}.
\newblock In: {\sl \bibinfo{booktitle}{Proc. of the 9th International
  Conference Computer Aided Verification (CAV'97)}}, {\sl
  \bibinfo{series}{LNCS}} \bibinfo{volume}{1254},
  \bibinfo{publisher}{Springer}, pp. \bibinfo{pages}{72--83},
  \doi{10.1007/3-540-63166-6\_10}.

\bibitemdeclare{incollection}{Groote2004}
\bibitem{Groote2004}
\bibinfo{author}{J.~F. \surnamestart Groote\surnameend} \&
  \bibinfo{author}{T.~\surnamestart Willemse\surnameend}
  (\bibinfo{year}{2004}): \emph{\bibinfo{title}{Parameterised {B}oolean
  Equation Systems}}.
\newblock In: {\sl \bibinfo{booktitle}{Proc. of 15th International Conference
  on Concurrency Theory (CONCUR 2004)}}, {\sl \bibinfo{series}{LNCS}}
  \bibinfo{volume}{3170}, \bibinfo{publisher}{Springer}, pp.
  \bibinfo{pages}{308--324}, \doi{10.1007/978-3-540-28644-8\_20}.

\bibitemdeclare{article}{Groote:2005:MPD:1099102.1099103}
\bibitem{Groote:2005:MPD:1099102.1099103}
\bibinfo{author}{J.~F. \surnamestart Groote\surnameend} \&
  \bibinfo{author}{T.~A.~C. \surnamestart Willemse\surnameend}
  (\bibinfo{year}{2005}): \emph{\bibinfo{title}{Model-checking Processes with
  Data}}.
\newblock {\sl \bibinfo{journal}{Science of Computer Programming}}
  \bibinfo{volume}{56}(\bibinfo{number}{3}), pp. \bibinfo{pages}{251--273},
  \doi{10.1016/j.scico.2004.08.002}.

\bibitemdeclare{article}{GROOTE2005332}
\bibitem{GROOTE2005332}
\bibinfo{author}{J.~F. \surnamestart Groote\surnameend} \&
  \bibinfo{author}{T.~A.~C. \surnamestart Willemse\surnameend}
  (\bibinfo{year}{2005}): \emph{\bibinfo{title}{Parameterised Boolean Equation
  Systems}}.
\newblock {\sl \bibinfo{journal}{Theoretical Computer Science}}
  \bibinfo{volume}{343}(\bibinfo{number}{3}), pp. \bibinfo{pages}{332--369},
  \doi{10.1016/j.tcs.2005.06.016}.

\bibitemdeclare{incollection}{Groote2004a}
\bibitem{Groote2004a}
\bibinfo{author}{J.~F.~\surnamestart Groote\surnameend} \&
  \bibinfo{author}{T.~A.~C. \surnamestart Willemse\surnameend}
  (\bibinfo{year}{2004}): \emph{\bibinfo{title}{A Checker for Modal Formulae
  for Processes with Data}}.
\newblock In: {\sl \bibinfo{booktitle}{Proc. of the 2nd International Symposium
  on Formal Methods for Components and Objects (FMCO 2003)}}, {\sl
  \bibinfo{series}{LNCS}} \bibinfo{volume}{3188},
  \bibinfo{publisher}{Springer}, pp. \bibinfo{pages}{223--239},
  \doi{10.1007/978-3-540-30101-1\_10}.

\bibitemdeclare{inproceedings}{Koolen2015}
\bibitem{Koolen2015}
\bibinfo{author}{R.~P.~J. \surnamestart Koolen\surnameend},
  \bibinfo{author}{T.~A.~C. \surnamestart Willemse\surnameend} \&
  \bibinfo{author}{H.~\surnamestart Zantema\surnameend} (\bibinfo{year}{2015}):
  \emph{\bibinfo{title}{Using SMT for Solving Fragments of Parameterised
  Boolean Equation Systems}}.
\newblock In: {\sl \bibinfo{booktitle}{Proc. of the 13th International
  Symposium on Automated Technology for Verification and Analysis (ATVA
  2015)}}, {\sl \bibinfo{series}{LNCS}} \bibinfo{volume}{9364},
  \bibinfo{publisher}{Springer}, pp. \bibinfo{pages}{14--30},
  \doi{10.1007/978-3-319-24953-7\_3}.

\bibitemdeclare{article}{Nagae2016}
\bibitem{Nagae2016}
\bibinfo{author}{Y.~\surnamestart Nagae\surnameend},
  \bibinfo{author}{M.~\surnamestart Sakai\surnameend} \&
  \bibinfo{author}{H.~\surnamestart Seki\surnameend} (\bibinfo{year}{2017}):
  \emph{\bibinfo{title}{An Extension of Proof Graphs for Disjunctive
  Parameterised Boolean Equation Systems}}.
\newblock {\sl \bibinfo{journal}{Electronic Proceedings in Theoretical Computer
  Science}} \bibinfo{volume}{235}, pp. \bibinfo{pages}{46--61},
  \doi{10.4204/EPTCS.235.4}.

\bibitemdeclare{inproceedings}{invariants}
\bibitem{invariants}
\bibinfo{author}{S.~\surnamestart Orzan\surnameend},
  \bibinfo{author}{W.~\surnamestart Wesselink\surnameend} \&
  \bibinfo{author}{T.~A.~C.~\surnamestart Willemse\surnameend}
  (\bibinfo{year}{2009}): \emph{\bibinfo{title}{Static Analysis Techniques for
  Parameterised Boolean Equation Systems}}.
\newblock In: {\sl \bibinfo{booktitle}{Proc. of the 15th International
  Conference on Tools and Algorithms for the Construction and Analysis of
  Systems (TACAS 2009)}}, {\sl \bibinfo{series}{LNCS}} \bibinfo{volume}{5505},
  \bibinfo{publisher}{Springer}, pp. \bibinfo{pages}{230--245},
  \doi{10.1007/978-3-642-00768-2\_22}.

\bibitemdeclare{article}{PLOEGER2011637}
\bibitem{PLOEGER2011637}
\bibinfo{author}{B.~\surnamestart Ploeger\surnameend}, \bibinfo{author}{J.~W.
  \surnamestart Wesselink\surnameend} \& \bibinfo{author}{T.~A.~C.
  \surnamestart Willemse\surnameend} (\bibinfo{year}{2011}):
  \emph{\bibinfo{title}{Verification of Reactive Systems via Instantiation of
  Parameterised Boolean Equation Systems}}.
\newblock {\sl \bibinfo{journal}{Information and Computation}}
  \bibinfo{volume}{209}(\bibinfo{number}{4}), pp. \bibinfo{pages}{637--663},
  \doi{10.1016/j.ic.2010.11.025}.

\bibitemdeclare{inproceedings}{presburger1931vollstandigkeit}
\bibitem{presburger1931vollstandigkeit}
\bibinfo{author}{M.~\surnamestart Presburger\surnameend}
  (\bibinfo{year}{1929}): \emph{\bibinfo{title}{{\"U}ber die
  Vollst{\"a}ndigkeit eines gewissen Systems der Arithmetik ganzer Zahlen}}.
\newblock In: {\sl \bibinfo{booktitle}{welchem die Addition als einzige
  Operation hervortritt, C. R. ler congr\`es des Math\'ematiciens des pays
  slaves, Warszawa}}, pp. \bibinfo{pages}{92--101}.

\end{thebibliography}

\newpage
\appendix
\section{Proof of Proposition \ref{prop:expbes}}
\begin{definition}\label{def:expbes}
\emph{Existential PBESs} are subclass of PBES defined by the following grammar:
\[
\begin{array}[t]{rcl}
 \Epsilon & ::= & \emptyset \ |\
  \left(\nu X\left(d: D\right) = \varphi\right) \Epsilon\ |\
  \left(\mu X\left(d:D\right) = \varphi\right) \Epsilon \\
 \varphi & ::= & b \ |\ \varphi \wedge \varphi \ |\ \varphi \vee
  \varphi \ |\ \exists d{:}D\ \varphi \ |\ X(\vvec{exp})
\end{array}
\]

The difference from the original definition is that the universal-quantifier does not occur in $\varphi$.
\end{definition}

\expbes*
\begin{proof}
 We can normalize each $\varphi$ by the following steps.
\begin{enumerate}
 \item Rename all bound variables so that different scope variables are distinct.
 \item Lift up all existential quantifiers to obtain the form $\exists \vvec{e}{:}D.\ \varphi'$.
 \item Calculate the disjunctive normal form of $\varphi'$ and obtain $\bigvee_R \varphi_R$.
	   Then $\bigvee_R \exists \vvec{e}{:}D.\ \varphi_R$ is an expected form.
\end{enumerate}
\QED
\end{proof}

\section{Proofs related to proof graphs}
\subsection{Parity games and dependency spaces}
\begin{definition}
 A \emph{parity game} is a directed graph $G = (V, {\to}, \Omega, \Pi)$ such that
 \begin{itemize}
  \item $V$ is a set of vertices,
  \item ${\to}$ is a set of edges with ${\to} \subseteq V \times V$,
  \item $\Omega : V \to \mathbb{N}$ is a function which assigns \emph{priority} to each vertex, and
  \item $\Pi : V \to \{\circ, \Box\}$ is a function which assigns \emph{player} to each vertex.
 \end{itemize}
 Parity game is a game played by two players $\circ$ and $\Box$.
 This game progresses by moving the piece placed at a vertex along an edge.
 The player $\Pi(v)$ moves the piece when the piece is on a vertex $v$.
 We call a parity game started from $v$ if the piece is initially placed on the vertex $v$.
 A play of a parity game is a sequence of vertices that the piece goes through.
 The player $\circ$ \emph{wins} the game if and only if the player $\Box$ cannot move the piece,
 or the largest priority that occurs infinitely often in the play is an even number when the play continues infinitely.
 We write $\rank^\infty(\pi)$ as this largest priority for the play $\pi$.
 \end{definition}
 For a given parity game $G$ started from a fixed vertex, a strategy $S$ of a player is a function
 that selects a vertex to move along an edge, given a vertex owned by the player.
 A strategy $S$ is called a \emph{winning strategy} when
 the player wins the game $G$ according to $S$ even if the opponent player plays optimally.

 The next proposition shows the determinacy of the winner for parity games.
 \begin{proposition}[\cite{emerson1991tree}]\label{prop:pg-determinacy}
  Given a parity game, starting vertex dominates the winner, which
  means that either of the player $\circ$ or $\Box$ has a winning strategy
  according to the vertex started from.
 \end{proposition}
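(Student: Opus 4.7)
The plan is to prove parity game determinacy by induction, building explicit winning strategies via the attractor construction. In fact I would aim for the stronger statement of positional (memoryless) determinacy, from which Proposition~\ref{prop:pg-determinacy} follows immediately by fixing the starting vertex.

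First I would set up the induction on a well-founded measure, for concreteness the lexicographic pair consisting of the number of distinct priorities occurring in $V$ and $|V|$. The base case is a game whose vertices all have the same priority: if the priority is even then $\circ$ wins from every vertex having at least one available continuation for every $\Box$-move (and trivially loses the rest), with the symmetric statement for odd priority; there is nothing to do for dead-end vertices.

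For the inductive step, let $p$ be the maximum priority occurring in $G$ and suppose without loss of generality that $p$ is even (the odd case is dual). Set $U = \Omega^{-1}(p)$, and let $A = \mathrm{Attr}_\circ(U)$ be the $\circ$-attractor of $U$, i.e.\ the set of vertices from which $\circ$ can force a visit to $U$ in finitely many moves, together with the positional strategy $\tau_A$ realising this. Consider the subgame $G' = G \restriction (V \setminus A)$; restricting is legitimate because $A$ is a $\circ$-attractor, so $\Box$ cannot be forced to leave $V\setminus A$. The subgame $G'$ has strictly fewer vertices, so by induction its vertex set partitions into winning regions $W'_\circ$ and $W'_\Box$ with positional winning strategies. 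Case split: if $W'_\Box = \emptyset$, then $\circ$ wins everywhere in $G$ by playing $\tau_A$ inside $A$ (which forces a return to $U$, hence to priority $p$) and the inductive $\circ$-strategy outside $A$; any infinite play must either visit $U$ infinitely often (maximum priority $p$ is even, a win) or eventually stay in $V\setminus A$ forever, where the inductive strategy wins. If $W'_\Box \neq \emptyset$, take $B = \mathrm{Attr}_\Box(W'_\Box)$ in the full game $G$; $B$ is a winning region for $\Box$ by composing the $\Box$-attractor strategy with the inductive winning strategy on $W'_\Box$. Then apply the induction hypothesis to $G \restriction (V \setminus B)$, which has strictly fewer vertices, and combine to obtain the global partition $V = W_\circ \cup W_\Box$.

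The main obstacle will be verifying that the strategies pasted together from attractor strategies and inductive subgame strategies really are winning on infinite plays: one has to check that a play cannot escape the intended region infinitely often, and that the maximum priority seen infinitely often has the right parity. The attractor construction handles the finite-visit obligations, and the termination of the recursion is guaranteed by the decrease in $|V|$ in each branch, but one must carefully track which plays fall into which winning region and why positional strategies suffice — in particular, that the $\circ$-winning strategy on $A$ and on $V\setminus A$ can be glued without memory. Once positional determinacy is established, Proposition~\ref{prop:pg-determinacy} is a direct corollary: for any start vertex $v$, exactly one of $v\in W_\circ$ or $v\in W_\Box$ holds, and the corresponding positional strategy is a winning strategy from $v$.
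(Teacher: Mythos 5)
The paper does not actually prove this proposition: it is imported verbatim from Emerson and Jutla \cite{emerson1991tree}, so there is no internal proof to compare against. Your sketch is the standard McNaughton--Zielonka attractor recursion for positional determinacy, and for \emph{finite} parity games it is essentially sound, modulo the gluing obligations you flag yourself. One detail you should state explicitly in the case $W'_\Box \neq \emptyset$: you need not only that $\Box$ cannot be forced to leave $V \setminus A$, but that $\circ$ cannot leave it at all (a $\circ$-vertex outside a $\circ$-attractor has no edge into the attractor, else it would belong to it); this is what makes $W'_\Box$ a trap for $\circ$ in the full game and lets the $G'$-strategy of $\Box$ survive the passage back to $G$. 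Also, with the paper's convention that $\circ$ wins exactly when $\Box$ is stuck or the parity condition holds, the base case is a safety/reachability analysis over dead-ends rather than ``trivial''.

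The genuine gap is one of scope. The proposition is applied in this paper to parity games obtained from dependency spaces, and the paper notes explicitly that these have infinitely many vertices (only the set of priorities $\{\Omega(v) \mid v \in V\}$ is finite); Lemma~\ref{lem:dseqrds} invokes determinacy on such games in general. Your induction measure is lexicographic in (number of priorities, $|V|$), and the recursive step argues ``the subgame has strictly fewer vertices'', which is not a well-founded descent when $V$ is infinite; similarly, attractors can no longer be exhausted in $|A|$ steps and must be built by (transfinite) iteration. As written, your argument therefore establishes determinacy only for finite arenas, which does not cover the games the paper needs. The repair is either to induct on the number of distinct priorities alone, handling the case $W'_\Box \neq \emptyset$ by an ordinal-indexed accumulation of $\Box$-winning traps and taking unions (Zielonka's proof for arbitrary arenas), or simply to cite Emerson--Jutla's memoryless determinacy theorem for arbitrary game graphs with finitely many priorities, which is exactly what the paper does.
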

 We say that \emph{a player wins the game on a vertex} if he has a winning strategy started from the vertex.

A dependency space for a PBES $\Epsilon$ is regarded as a parity game.
The function $\Pi$ naturally defined from the shape of $\vee$/$\wedge$-vertices,
  i.e., $\Pi(v)={\circ}$ if and only if $v$ is an $\vee$-vertex.
We give the priority function $\Omega$ as follows:
\[
\Omega(v) = \left\{
		\begin{array}{ll}
		 u - \rank(X_i), & \text{if } v = X_i(\vvec{v}) \in \sig(\Epsilon)\\
		 0, & \text{otherwise}
		\end{array}
		\right.
\]
where $u$ is the minimum even number satisfying $u \geq \rank(X_i)$ for any $X_i \in \mathrm{bnd}(\Epsilon)$.
We call this \emph{a parity game obtained from $\Epsilon$}.
%
Note that $V$ and ${\to}$ in dependency spaces are generally infinite sets, while $\{\Omega(v) \mid v \in V\}$ is finite.
Thus each parity game corresponding to a dependency space is well-defined.

Moreover, a reduced dependency space obtained from a feasible relation ${\sim}$ is also regarded as a parity game,
because $\Omega / {\sim}$ and $\Pi / {\sim}$ are well-defined.
We call this \emph{a reduced parity game $G / {\sim}$} for a parity game $G$ and a feasible relation ${\sim}$.

\subsection{Proof of Lemma~\ref{lem:depsp2pg}}
This lemma is justified by Lemma~\ref{lem:sec3}.
\begin{lemma}\label{lem:sec3}
 For a given PBES $\Epsilon$, there exists a proof graph that proves $X(\vvec{v}_0)$
 if and only if
 the player $\circ$ wins 
 the game $G$, obtained from $\Epsilon$, on $X(\vvec{v}_0)$.
\end{lemma}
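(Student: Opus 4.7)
The plan is to view the dependency space as a two-player parity game in the standard way and then translate the two conditions of Definition~\ref{def:PG} into the winning condition for $\circ$. The key setup is that $\vee$-vertices are $\circ$-owned while $\wedge$-vertices are $\Box$-owned, and the priority $\Omega(X_i(\vvec{v})) = u - \rank(X_i)$ is designed precisely so that a \emph{minimum} rank on a cycle being even coincides with the \emph{maximum} priority on that cycle being even; the extra priority $0$ assigned to $\wedge$-vertices never dominates since all signature priorities are nonnegative, and since signature and $\wedge$-vertices alternate along any play, the maximum priority occurring infinitely often is always realized at a signature.

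For the forward direction, assume a proof graph $\<V,\to,\mathbbold{t}>$ with $X(\vvec{v}_0)\in V$. For each $X_i(\vvec{v})\in V$, condition~(1) and Proposition~\ref{prop:expbes} guarantee some clause index $k$ and witness $\vvec{w}$ with $\varphi_{ik}(\vvec{v},\vvec{w})$ holding and $X_{a_{ikj}}(\wvec{f_{ikj}(\vvec{v},\vvec{w})})\in V$ for all $j$. I would define a positional $\circ$-strategy that, at each such $X_i(\vvec{v})$, moves to the associated $\wedge$-vertex $(i,k,\vvec{v},\vvec{w})$; any subsequent $\Box$-move then lands in some $X_{a_{ikj}}(\wvec{f_{ikj}(\vvec{v},\vvec{w})})\in V$, so the play stays within $V$ indefinitely. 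An infinite play thus projects to an infinite signature sequence in the proof graph, and condition~(2) says the minimum rank of its infinitely occurring variables is even, hence the maximum priority occurring infinitely often is even, so $\circ$ wins.

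For the backward direction, I would invoke memoryless determinacy for parity games, which, via Proposition~\ref{prop:pg-determinacy} and its standard strengthening, yields a positional winning strategy $S$ for $\circ$ from $X(\vvec{v}_0)$. Let $V$ be the set of signatures $\circ$-reachable from $X(\vvec{v}_0)$ when $\circ$ plays $S$ and $\Box$ plays arbitrarily; for each $X_i(\vvec{v})\in V$ with $S(X_i(\vvec{v}))=(i,k,\vvec{v},\vvec{w})$, take edges $X_i(\vvec{v})\to X_{a_{ikj}}(\wvec{f_{ikj}(\vvec{v},\vvec{w})})$ for all $j$ (collapsing through the chosen $\wedge$-vertex). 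Condition~(1) holds because the chosen clause $k$ with witness $\vvec{w}$ realizes $\varphi_i(\vvec{v})$ given that all its conjuncts are assumed $\mathbbold{t}$. For condition~(2), any infinite path in the constructed graph lifts to an infinite play consistent with $S$ by inserting the chosen $\wedge$-vertex between consecutive signatures; since $S$ is winning, the maximum priority occurring infinitely often is even, which by the definition of $\Omega$ translates back to the minimum rank being even.

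The main obstacle I expect is bookkeeping the two translations cleanly: on the proof-graph side condition~(2) ranges over all infinite sequences in the graph (so every branching choice must be accounted for, which is why the proof graph must include \emph{all} conjuncts $j$ under the chosen clause), while on the game side the $\Box$-player resolves exactly those choices. Getting the quantifier order right, and invoking positional determinacy to avoid memoryful strategies when unfolding the proof graph, are the subtle points; the priority-versus-rank arithmetic is routine once $u$ is fixed to be the minimum even upper bound of all ranks.
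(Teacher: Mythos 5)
Your proposal matches the paper's own proof in both directions: from a proof graph you build the $\circ$-strategy that always moves to the $\wedge$-vertex witnessing the satisfied clause and argue that plays stay inside the graph, projecting infinite plays back to proof-graph paths to compare minimum rank with maximum priority; conversely you unfold a winning $\circ$-strategy into a proof graph by collapsing the chosen $\wedge$-vertices, exactly as the paper does. The only cosmetic difference is your explicit appeal to positional determinacy, which the paper avoids because its strategies are positional by definition.
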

\begin{proof}
 We use the notation $X_i(\vvec{v})^\bullet$ to represent the postset of $X_i(\vvec{v})$ for the target graph.

 $\Rightarrow$) Let $P$ be a proof graph of $X(\vvec{v}_0)$.
 By the form of existential PBESs and the condition (\ref{def:PG:1}) of proof graphs,
 for every vertex $X_i(\vvec{v})$ in $P$ there exist $k$ and $\vvec{w}$ such that
 \[
 \varphi_{ik}(\vvec{v},\vvec{w}) \text{ and }
 X_i(\vvec{v})^\bullet \supseteq\\ \{X_{a_{ik1}}(f_{ik1}(\vvec{v},\vvec{w})),\dots,X_{a_{ikp_{ik}}}(f_{ikp_{ik}}(\vvec{v},\vvec{w}))\}.
 \]
 From this, we take the strategy of the player $\circ$:
 when the piece is on a vertex $X_i(\vvec{v}) \in P$, move the piece to $(i,k,\vvec{v},\vvec{w})$ for $k,\vvec{w}$
 determined by the above condition.

 We show that the piece never goes on the $\vee$-vertices not in $P$ with the parity game started from $X(\vvec{v_0})$
 by induction of the steps of the game.
 First, the piece is on $X(\vvec{v}_0)$, and it is in $P$.
 Assume the piece is on $X_i(\vvec{v}) \in P$.
 Let $(i,k,\vvec{v},\vvec{w})$ be the $\wedge$-vertex which the piece goes on under the strategy.
 From the definition of the strategy, all vertices to which the player $\Box$ can move the piece are in $P$.
 Thus, the piece never goes on the $\vee$-vertices not in $P$, and the strategy is well-defined.
 
 We show that the strategy is winning strategy started from $X(\vvec{v}_0)$.
 If the piece cannot move on a vertex $X_i(\vvec{v})$, then $\varphi_{ik}(\vvec{v},\vvec{w})$ never holds for any $k, \vvec{w}$
 by the definition of $G$.
 Then, it contradicts the condition (\ref{def:PG:1}) of proof graphs.
 In addition, suppose that there exists an infinite parity game in which the player $\circ$ loses.
 Let $\pi$ be the infinite path tracing the movement of the piece.
 We define an infinite path $\pi'$ from $\pi$ by collapsing the $\wedge$-vertices into $\vee$-vertices.
 Because $\pi'$ only consists of $\vee$-vertices, $\pi'$ is also a path of $P$.
 $P$ is a proof graph, therefore, the minimum rank of $X^\infty$ with $\pi'$ is even by the condition~(\ref{def:PG:2}) of proof graphs.
 In contrast, the largest priority on $\pi'$ is odd because the player $\circ$ loses.
 The minimum rank with $\pi'$ is odd by the definition of the priority function $\Omega$, however, this contradicts.

 $\Leftarrow$) Assume that the player $\circ$ has a winning strategy.
 We construct a graph $P$:
 \begin{itemize}
  \item $X(\vvec{v}_0)$ is in $P$.
  \item Suppose that $X_i(\vvec{v})$ is in $P$ and $(i,k,\vvec{v},\vvec{w})$ is the vertex to which the player $\circ$ 
		moves the piece on $X_i(\vvec{v})$.
		Then, $(i,k,\vvec{v},\vvec{w})^\bullet$ is in $P$ and there exist edges from $X_i(\vvec{v})$ to each vertex in $(i,k,\vvec{v},\vvec{w})^\bullet$.
 \end{itemize}
 It is obvious that the graph $P$ have the conditions of proof graphs.
 \QED
\end{proof}

\subsection{Proof of Theorem \ref{thm:rds-is-decidable}}
\begin{proposition}\label{prop:congr}
 Let $G$ be a parity game and $G / {\sim}$ be a reduced parity game for a PBES.
 Then, for all vertex $[A]_{\sim}, [B]_{\sim} \in G / {\sim}$ and for all $u \in [A]_{\sim}$,
 $[A]_{\sim} \to [B]_{\sim}$ implies $\exists v \in [B]_{\sim}.\ u \to v$.
\end{proposition}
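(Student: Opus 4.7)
The plan is to unfold the definition of an edge in the quotient graph and then invoke the feasibility conditions on $\langle {\sim_D}, {\sim_B}\rangle$ to lift the witness edge across equivalence classes.

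First I would observe that by the definition of $E'$ in the reduced dependency space, $[A]_{\sim} \to [B]_{\sim}$ holds exactly when there exist representatives $u_0 \in [A]_{\sim}$ and $v_0 \in [B]_{\sim}$ with $u_0 \to v_0$ in $G$. Given an arbitrary $u \in [A]_{\sim}$, I need to produce some $v \in [B]_{\sim}$ with $u \to v$. The argument splits into two cases depending on whether $[A]_{\sim}$ is an equivalence class of $\vee$-vertices or of $\wedge$-vertices; these are the only possibilities since ${\sim}$ is defined as ${\sim_D} \cup {\sim_B}$ and the two kinds of vertices live in disjoint carrier sets.

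In the $\vee$-case, the first feasibility condition guarantees that every element of $[A]_{\sim}$ has the same predicate variable $X_i$, so write $u_0 = X_i(\vvec{v_0})$ and $u = X_i(\vvec{v})$. The edge $u_0 \to v_0$ must come from the dependency space construction, meaning there exist indices $k$ such that $v_0 \in F_{ik}(X_i(\vvec{v_0}))$ (if $v_0$ is a $\wedge$-vertex) or $v_0 \in G_{ik}$-image, but since postsets of $\vee$-vertices are $\wedge$-vertices, we have $v_0 = (i,k,\vvec{v_0},\vvec{w_0}) \in F_{ik}(u_0)$ for some $k$ and $\vvec{w_0}$. By the second feasibility condition, $F_{ik}(X_i(\vvec{v})) \sim_B F_{ik}(X_i(\vvec{v_0}))$, and unfolding the definition of the induced equivalence on subsets yields a witness $v \in F_{ik}(u)$ with $v \sim_B v_0$, i.e.\ $v \in [B]_{\sim}$. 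By construction of $E$ we have $u \to v$ in $G$.

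The $\wedge$-case is symmetric: the third feasibility condition forces all elements of $[A]_{\sim}$ to share the same $(i,k)$, so write $u_0 = (i,k,\vvec{v_0},\vvec{w_0})$ and $u = (i,k,\vvec{v},\vvec{w})$; then $v_0 \in G_{ik}(u_0)$, and the fourth feasibility condition gives $G_{ik}(u) \sim_D G_{ik}(u_0)$, producing a suitable $v \in [B]_{\sim} \cap G_{ik}(u)$, hence $u \to v$.

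I do not expect a serious obstacle here; the proposition is really a restatement of feasibility in graph-theoretic language. The only slightly delicate point is handling the set-level equivalence $\sim$ on $2^A$ cleanly, which amounts to observing that $\alpha \sim \beta$ means each element on one side has a $\sim$-equivalent partner on the other side, so the existence of $v_0 \in \beta \cap [B]_\sim$ yields the desired $v \in \alpha \cap [B]_\sim$. Once this is spelled out, both cases are immediate.
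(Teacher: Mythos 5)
Your proof is correct; note that the paper states Proposition~\ref{prop:congr} without proof, and your argument---splitting on whether $[A]_{\sim}$ is a class of $\vee$- or $\wedge$-vertices, using feasibility conditions (1) and (3) to fix $X_i$ resp.\ $(i,k)$ across the class, and then conditions (2) and (4) together with the lifted equivalence on subsets to transport the witness edge---is precisely the unfolding of feasibility that the authors implicitly rely on. The only tacit assumption is that the postset of a $\wedge$-vertex $(i,k,\vvec{v},\vvec{w})$ is exactly $G_{ik}(i,k,\vvec{v},\vvec{w})$, i.e.\ that outgoing edges of $\wedge$-vertices do not additionally require $\varphi_{ik}(\vvec{v},\vvec{w})$ to hold; this matches the paper's description of $G_{ik}$ as the successor map and its Example~\ref{ex:partition} figure, so it is a fair reading rather than a gap.
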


The theorem~\ref{thm:rds-is-decidable} holds immediate from the following lemma and the fact that solving a finite parity game is decidable.
\begin{lemma}\label{lem:dseqrds}
 The player $\circ$ wins on $X(\vvec{v})$ for a dependency space if and only if the player $\circ$ wins on $[X(\vvec{v})]_{\sim}$ for a reduced dependency space.
\end{lemma}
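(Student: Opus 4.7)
The plan is to exploit the fact that feasibility of $\<{\sim}_D,{\sim}_B>$ makes the quotient map $q:V\to V/{\sim}$ a bisimulation between the parity games $G$ and $G/{\sim}$, and then to transfer winning strategies in both directions. Concretely, the forth condition ($u\to v$ in $G$ implies $[u]_\sim\to[v]_\sim$ in $G/{\sim}$) is immediate from the definition $E'=\{([u]_\sim,[v]_\sim)\mid (u,v)\in E\}$, while the back condition (if $[A]_\sim\to[B]_\sim$ then for every $u\in[A]_\sim$ there is $v\in[B]_\sim$ with $u\to v$) is exactly Proposition~\ref{prop:congr}. The ownership $\Pi/{\sim}$ and priority $\Omega/{\sim}$ are well-defined on $\sim$-classes, because feasibility forbids identifying $X_i(\vvec{v})$ with $X_j(\vvec{v'})$ when $i\neq j$, and $(i,k,\vvec{v},\vvec{w})$ with $(j,k',\vvec{v'},\vvec{w'})$ when $(i,k)\neq(j,k')$, so both labels are $\sim$-invariant.

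For the direction $(\Leftarrow)$ I would lift a winning strategy $\hat S$ for $\circ$ in $G/{\sim}$ from $[X(\vvec{v})]_\sim$ to a winning strategy $S$ in $G$ from $X(\vvec{v})$ as follows: at any $\vee$-vertex $u$, let $\hat S([u]_\sim)=[B]_\sim$ and apply Proposition~\ref{prop:congr} to pick some $v\in[B]_\sim$ with $u\to v$, setting $S(u):=v$. Any play in $G$ that follows $S$ projects vertex-wise under $q$ to a play in $G/{\sim}$ that follows $\hat S$, and because dead-ends, ownership, and the infinite-play winning condition depend only on the $\sim$-class, the projected play is won by $\circ$ iff the original one is; hence the original is winning.

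For the harder direction $(\Rightarrow)$ I would use a history-dependent shadow-memory construction rather than attempting to project a positional strategy directly. Let $S$ be a winning strategy for $\circ$ in $G$ from $X(\vvec{v})$ and define a strategy $\hat S$ in $G/{\sim}$ by carrying a memory vertex $m\in V$, initialised to $m=X(\vvec{v})$, with the invariant that the current position $[v']_\sim$ of the reduced game satisfies $m\in[v']_\sim$. At a $\vee$-class $\hat S$ plays $[S(m)]_\sim$ and updates $m:=S(m)$; at a $\wedge$-class, when $\Box$ moves to some target $[B]_\sim$, Proposition~\ref{prop:congr} produces $x\in[B]_\sim$ with $m\to x$, and we update $m:=x$. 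The sequence $m_0,m_1,\ldots$ is then a valid play in $G$ from $X(\vvec{v})$ conforming to $S$, so it is won by $\circ$; since $\Omega(m_i)=\Omega([m_i]_\sim)$ and dead-ends are preserved by $q$, the induced play in $G/{\sim}$ has the same winning condition and is therefore won by $\circ$.

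The main obstacle is precisely this $(\Rightarrow)$ direction: a positional winning strategy $S$ on $V$ does not in general descend to a well-defined strategy on $V/{\sim}$, because two $\sim$-equivalent $\vee$-vertices $u,u'$ may be sent by $S$ to successors lying in different $\sim$-classes. The shadow-memory construction sidesteps this by using the actual play history in $G/{\sim}$ to commit to a single representative $m$ at each step; the back condition supplied by Proposition~\ref{prop:congr} is exactly what is needed to update $m$ coherently after each move of $\Box$.
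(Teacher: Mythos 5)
Your proof is correct, and the $(\Leftarrow)$ direction coincides with the paper's: lift $\circ$'s winning strategy from $G/{\sim}$ to $G$ by choosing, via Proposition~\ref{prop:congr}, a concrete successor in the prescribed class. Where you genuinely diverge is the $(\Rightarrow)$ direction. The paper does \emph{not} push $\circ$'s strategy down to the quotient at all: it argues by contraposition, invoking determinacy (Proposition~\ref{prop:pg-determinacy}) to get a winning strategy for $\Box$ on $[X(\vvec{v})]_{\sim}$, and then lifts \emph{that} strategy up to $G$ by the same Proposition~\ref{prop:congr} mechanism, so only one (positional) lifting construction is ever needed, applied once per player. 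You instead give a direct simulation: a shadow-memory strategy for $\circ$ in $G/{\sim}$ that tracks a concrete representative $m$, updated by $S$ at $\vee$-classes and by Proposition~\ref{prop:congr} at $\Box$'s moves, so the shadow sequence is a play of $G$ conforming to $S$ with the same priority sequence. This is sound (and your observation that dead-ends are $\sim$-invariant is right, precisely because the back condition holds for \emph{every} member of a class), and it correctly identifies why the naive projection of a positional strategy fails. The trade-off: your route avoids the appeal to determinacy, but the strategy you build is history-dependent, whereas the paper's definition of a strategy is a function on vertices (positional); to conclude literally that ``$\circ$ wins on $[X(\vvec{v})]_{\sim}$'' in the paper's sense you would still need to either broaden the notion of strategy to history-dependent ones (harmless, since the winning condition is on plays) or invoke positional determinacy of parity games to convert your memory strategy — at which point the determinacy appeal the paper makes reappears. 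The paper's contrapositive argument buys uniformly positional strategies at the cost of using Proposition~\ref{prop:pg-determinacy}; yours buys a constructive, determinacy-free transfer at the cost of memory.
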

\begin{proof}
 $\Rightarrow$)  We prove contraposition.
 Suppose the player $\circ$ loses on $[X(\vvec{v})]_{\sim}$.
 This means that the player $\Box$ wins on $[X(\vvec{v})]_{\sim}$ by the proposition~\ref{prop:pg-determinacy}.
 Let $S'$ be a winning strategy of the player $\Box$ on $[X(\vvec{v})]_{\sim}$.
 Then, a strategy $S$ of the player $\Box$ on $X(\vvec{v})$ can be defined as $S(u) = v$ for $u \in G$ where $v \in S'([u]_{\sim})$
 from the proposition~\ref{prop:congr}.

 Any game $P$ on $X(\vvec{v})$ according to the strategy $S$ is corresponding to a game $P'$ on $[X(\vvec{v})]_{\sim}$
 according to the strategy $S'$.
 That is, the sequence of the rank $P$ is equal to the sequence of $P'$.
 $S'$ is a winning strategy, therefore $S$ is also a winning strategy.

 $\Leftarrow$) Suppose the player $\circ$ has a winning strategy $S'$ on $[X(\vvec{v})]_{\sim}$.
 We can define a strategy $S$ on $X(\vvec{v})$ using $S'$ in a similar way, and $S$ is also a winning strategy.
 \QED
\end{proof}

\subsection{Proof of Theorem \ref{thm:proc}}
\validityproc*
\begin{proof}
 Proof by contradiction.
 Let $\mathcal{P} = \<\dots,\Phi_i^\infty,\dots,\Psi_{ik}^\infty,\dots>$ be a fixed-point of $H$
 and ${\sim_\mathcal{P}} = {\sim_\mathcal{P}^D} \cup {\sim_\mathcal{P}^B}$.
 If $\mathcal{P}$ is not feasible, then at least one of the following conditions holds:
 \begin{enumerate}
  \item There exists $X_i(\vvec{v})$ and $X_i(\vvec{v}')$ such that $X_i(\vvec{v}) \sim_{\mathcal{P}} X_i(\vvec{v}')$
		and $F_{ik}(X_i(\vvec{v})) \not\sim_{\mathcal{P}} F_{ik}(X_i(\vvec{v}'))$ for some $k$.
  \item There exists $(i,k,\vvec{v},\vvec{w})$ and $(i,k,\vvec{v'},\vvec{w'})$ such that $(i,k,\vvec{v},\vvec{w}) \sim_{\mathcal{P}} (i,k,\vvec{v'},\vvec{w'})$
		and $G_{ik}(i,k,\vvec{v},\vvec{w}) \not\sim_{\mathcal{P}} G_{ik}(i,k,\vvec{v'},\vvec{w'})$.
 \end{enumerate}
 Suppose the condition~(1) holds.
 Then, w.l.o.g., $\bigcup_{x \in F_{ik}(X_i(\vvec{v'}))} [x]_{\sim_\mathcal{P}} \cap [(i,k,\vvec{v},\vvec{w})]_{\sim_\mathcal{P}} = \emptyset$
 for some $(i,k,\vvec{v},\vvec{w}){\hspace*{0.3pt}\in}$ $F_{ik}(X_i(\vvec{v}))$ by the definition of feasible relation.
 In particular, because $\bigcup_{x \in F_{ik}(X_i(\vvec{v'}))} [x]_{\sim_\mathcal{P}} \supseteq F_{ik}(X_i(\vvec{v'}))$,
 $F_{ik}(X_i(\vvec{v'})) \cap [(i,k,\vvec{v},\vvec{w})]_{\sim_\mathcal{P}} = \emptyset$.
 Let $\psi = [(i,k,\vvec{v},\vvec{w})]_{\sim_\mathcal{P}} \in \Psi_{ik}^\infty$.
 We have $F_{ik}(X_i(\vvec{v})) \cap \psi \supseteq \{(i,k,\vvec{v},\vvec{w})\} \not= \emptyset$ and $F_{ik}(X_i(\vvec{v'})) \cap \psi = \emptyset$.
 Therefore, $X_i(\vvec{v})$ and $X_i(\vvec{v'})$ belong different block when ${\otimes_{ik}^D}$ splits $\mathcal{P}$.
 This contradicts that $\mathcal{P}$ is a fixed-point of $H$.

 Moreover, suppose the condition~(2) holds.
 By a similar argument, there exists $X_a(d_a) \in G_{ik}(i,k,\vvec{v},\vvec{w})$ such that
 $G_{ik}(i,k,\vvec{v'},\vvec{w'}) \cap [X_a(d_a)]_{\sim_\mathcal{P}} = \emptyset$.
 Recall $G_{ik}(i,k,\vvec{v},\vvec{w}) = \{X_{a_1}(f_1(\vvec{v},\vvec{w})),\dots,X_{a_p}(f_p(\vvec{v},\vvec{w}))\}$,
 $a = a_q$ and $d_a = f_q(\vvec{v},\vvec{w})$ for some $1 \leq q \leq p$.
 Let $\phi = [X_a(d_a)]_{\sim_\mathcal{P}} \in \Phi_a$.
 Obviously, $G_{ik}(i,k,\vvec{v},\vvec{w}) \cap \phi \not= \emptyset$ and $G_{ik}(i,k,\vvec{v'},\vvec{w'}) \cap \phi = \emptyset$.
 Thus, $(i,k,\vvec{v},\vvec{w})$ and $(i,k,\vvec{v'},\vvec{w'})$ belong different block when ${\otimes_{ik}^B}$ splits $\mathcal{P}$.
 This contradicts $\mathcal{P}$ is a fixed-point.
 \QED
\end{proof}
\end{document}